\long\def\symbolfootnote[#1]#2{\begingroup%
\def\thefootnote{\fnsymbol{footnote}}\footnote[#1]{#2}\endgroup}
\newcommand{\Expect}[1]{\mbox{}{\mathbb{E}}\left[#1\right]}
\newcommand{\FNorm }[1]{\mbox{}\|#1\|_\mathrm{F}  }
\newcommand{\FNormS}[1]{\mbox{}\|#1\|_\mathrm{F}^2}
\newcommand{\TNorm }[1]{\mbox{}\|#1\|_2  }
\newcommand{\TNormS}[1]{\mbox{}\|#1\|_2^2}
\newcommand{\XNorm }[1]{\mbox{}\|#1\|_{\xi}  }
\newcommand{\XNormS}[1]{\mbox{}\|#1\|_{\xi}^2}
\newtheorem{theorem}{\bf Theorem}
\newtheorem{lemma}[theorem]{Lemma}
\newtheorem{definition}[theorem]{Definition}
\newtheorem{proposition}[theorem]{Proposition}
\newcommand{\transp}{^{\textsc{T}}}
\newcommand{\mat}[1]{{\ensuremath{\bm{\mathrm{#1}}}}}
\newcommand{\pinv}[1]{ {#1}^{\dagger}}
\def\rank{\hbox{\rm rank}}
\def\b{{\mathbf b}}
\def\e{{\mathbf e}}
\def\A{\matA}
\def\matA{\mat{A}}
\def\matB{\mat{B}}
\def\matC{\mat{C}}
\def\matE{\mat{E}}
\def\matI{\mat{I}}
\def\matQ{\mat{Q}}
\def\matR{\mat{R}}
\def\matS{\mat{S}}
\def\matU{\mat{U}}
\def\matV{\mat{V}}
\def\matW{\mat{W}}
\def\matX{\mat{X}}
\def\matY{\mat{Y}}
\def\matZ{\mat{Z}}
\def\matSig{\mat{\Sigma}}
\def\matPsi{\mat{\Psi}}
\DeclareMathSymbol{\Prob}{\mathbin}{AMSb}{"50}
\newcommand\remove[1]{}
\newcommand\ignore[1]{}
\def\math#1{$#1$}
\def\frac#1#2{{#1\over #2}}
\def\eqan#1{\begin{eqnarray*}
#1
\end{eqnarray*}}
\DeclareMathSymbol{\R}{\mathbin}{AMSb}{"52}
\newcommand{\argmin}{\operatorname*{argmin}}
\def\x{{\mathbf x}}
\def\a{{\mathbf a}}
\def\b{{\mathbf b}}
\def\dotfil{\leaders\hbox to 1.5mm{.}\hfill}
\newcounter{rmnum}
\def\RN#1{\setcounter{rmnum}{#1}\uppercase\expandafter{\romannumeral\value{rmnum}}}
\def\rn#1{\setcounter{rmnum}{#1}\expandafter{\romannumeral\value{rmnum}}}
\begin{document}

\title{ {\bf Provable Deterministic Leverage Score Sampling \footnote{An extended abstract of this article appeared in the 20th ACM SIGKDD Conference on Knowledge Discovery and Data Mining.}}}
\author{
Dimitris Papailiopoulos\thanks{Electrical and Computer Engineering, UT Austin, dimitris@utexas.edu.
}
\and
Anastasios Kyrillidis\thanks{
School of Computer and Communication Sciences, EPFL, anastasios.kyrillidis@epfl.ch}
\and
Christos Boutsidis\thanks{
Yahoo! Labs, New York, boutsidis@yahoo-inc.com}
}

\date{}
\maketitle

\begin{abstract}
We explain \emph{theoretically} a curious empirical phenomenon:
``Approximating a matrix by {\it deterministically} selecting a subset of its columns with the corresponding 
largest leverage scores results in a good low-rank matrix surrogate''. 
To obtain provable guarantees, previous work requires randomized sampling
of the columns with probabilities proportional to their leverage scores.

In this work, we provide 
a novel theoretical analysis of \emph{deterministic leverage score sampling}. 
We show that such deterministic sampling can be provably as accurate as its randomized counterparts, 
if the leverage scores follow a moderately steep power-law decay.
We support this power-law assumption by 
providing empirical evidence that such decay laws are abundant in real-world data sets.
We then demonstrate empirically the performance of deterministic leverage score sampling, 
which many times matches or outperforms the state-of-the-art techniques.
\end{abstract}

\section{Introduction}\label{sec:intro}

Recently, there has been a lot of interest on selecting the ``best'' or ``more representative'' columns from a data matrix~\cite{DMM08,DM09}.
Qualitatively, these columns reveal the most important information hidden in the underlying matrix structure.
This is similar to what
principal components carry, as extracted via Principal Components Analysis (PCA)~\cite{Jol02}.
In sharp contrast to PCA, using actual columns of the data matrix to form a low-rank surrogate offers interpretability, making it more attractive to practitioners and data analysts~\cite{sun2007less,BMD08,tong2008colibri,DM09}.

To make the discussion precise and to rigorously characterize the ``best'' columns of a matrix,
let us introduce the following  \emph{Column Subset Selection Problem}  (CSSP). 

\medskip
\textsc{Column Subset Selection Problem.} \textit{Let $\matA \in \R^{m \times n}$ and let
$c < n$ be a sampling parameter. Find $c$ columns of $\matA$ -- denoted as
$\matC \in \R^{m \times c}$ -- that minimize 
$$\FNorm{\matA - \matC \pinv{\matC}\matA} \;\;\text{or}\;\; \TNorm{\matA - \matC \pinv{\matC}\matA},$$
where $\pinv{\matC}$ denotes the Moore-Penrose pseudo-inverse.}

\medskip
State of the art algorithms for the CSSP utilize both deterministic and randomized techniques; we discuss related work in Section~\ref{sec:related}.
Here, we describe two algorithms from prior literature that suffice to highlight our contributions.

A central part of our discussion will involve the {\it leverage scores} of a matrix  $\matA$, which we define below.

\begin{definition}\label{def:lev}
[Leverage scores] 
Let $\matV_k \in \mathbb{R}^{n \times k}$ contain the top $k$ right singular vectors of a $m \times n$ matrix 
$\matA$ with rank $\rho = \rank(\matA) \ge k$. 
Then,
the (rank-$k$) leverage score of the $i$-th column of $\matA$ is defined as 
$$
\ell_i^{(k)} = \TNormS{ [\matV_k]_{i,:} }, \quad i = 1, 2, \dots, n.
$$
Here, $[\matV_k]_{i,:}$ denotes the $i$-th row of $\matV_k$.
\end{definition}

One of the first algorithms for column subset selection dates back to 1972: in~\cite{Jol72}, Joliffe
proposes a deterministic sampling of
 the columns of $\matA$ that correspond to the largest  leverage scores $\ell_i^{(k)}$, for some $k < \rank(\matA)$. 
Although this simple approach has been extremely successful in practice~\cite{Jol72, Jol73,paschou2007pca,broadbent2010subset}, to the best of our knowledge, there has been no theoretical explanation why the approximation errors $\FNorm{\matA - \matC \pinv{\matC}\matA}$ and $\TNorm{\matA - \matC \pinv{\matC}\matA}$ should be small.

One way to circumvent the lack of a theoretical analysis for the above deterministic algorithm is by utilizing randomization.
Drineas et al.~\cite{DMM08} proposed the following approach: 
for a target rank $k < \rank(\matA)$, define a probability distribution over the columns of $\matA$, {\it i.e.,} the $i$th column is associated with a probability
$$ p_i = \ell_i^{(k)} / k;$$
observe that $\sum_{i} p_i=1,$ since $\sum_{i} \ell_i^{(k)} = \FNormS{\matV_k} = k$.
Then, in $c$ independent and identically distributed passes, sample with replacement $c$ columns from $\matA$, with probabilities given by $p_i$.
Drineas et al.~\cite{DMM08}, using results in~\cite{RV07}, show that this random subset of columns
$\matC \in \R^{m \times c}$ approximates $\matA$, with constant probability, within relative error: 
$ \FNorm{\matA - \matC \pinv{\matC}\matA} \le  \left( 1 + \varepsilon \right) \FNorm{\matA - \matA_k},$
when the number of sampled columns is 
$c = O(k \log k / \varepsilon^2),$ for some $0 < \varepsilon < 1$.
Here, $\matA_k \in \R^{m \times n}$ is the best rank-$k$  matrix obtained via the SVD.

There are two important remarks that need to be made: {\it (i)}
the randomized algorithm in~\cite{DMM08}  yields a matrix estimate that is ``near optimal'', 
i.e., has error close to that of the  best rank-$k$ approximation;  and {\it (ii)}
the above random sampling algorithm is a straightforward randomized version of the
deterministic algorithm of Joliffe~\cite{Jol72}.

From a practical perspective, the deterministic algorithm of Joliffe~\cite{Jol72} is extremely simple to implement, and is computationally efficient. Unfortunately, as of now, it did not admit provable performance guarantees.
An important open question~\cite{DMM08,paschou2007pca,broadbent2010subset} is:
\emph{Can one simply keep the columns having the largest
leverage scores, as suggested in~\cite{Jol72}, and still have a provably tight approximation?}

\subsection{Contributions}
In this work, we establish a new theoretical analysis for the deterministic leverage score sampling algorithm of Joliffe~\cite{Jol72}.
We show that if the leverage scores $\ell_i^{(k)}$ follow a sufficiently steep power-law decay, then this deterministic algorithm has provably similar or better performance to its randomized counterparts~(see Theorems~\ref{thm1} and \ref{thm2} in Section~\ref{sec:main}).
This means that under the power-law decay assumption, deterministic leverage score sampling provably obtains near optimal low-rank approximations and it can be as accurate as the ``best'' algorithms in the literature~\cite{BDM11a,GS12}.

From an applications point of view,
we support the power law decay assumption of our theoretical analysis by demonstrating  that several real-world data-sets have leverage scores following such decays.
We further run several experiments on synthetic and real data sets, and compare deterministic leverage score sampling with the state of the art algorithms for the CSSP.
In most experiments, the deterministic algorithm obtains tight low-rank approximations, and is shown to perform similar, if not better, than the state of the art.

\subsection{Notation}\label{sec:pre}

We use \math{\matA,\matB,\ldots} to denote matrices and
\math{\a,\b,\ldots} to denote column vectors.
$\matI_{n}$ is the $n \times n$
identity matrix;  $\bm{0}_{m \times n}$ is the $m \times n$ matrix of zeros;
$\e_i$ belongs to the standard basis (whose dimensionality will be clear from the context).
Let 
$$\matC=[\a_{i_1},\ldots,\a_{i_c}] \in \R^{m \times c},
$$
contain \math{c}
columns of~\math{\matA}.
We can equivalently write
\math{\matC=\matA \matS}, where the \emph{sampling matrix} is
\math{\matS=[\e_{i_1},\ldots,\e_{i_c}] \in \R^{n \times c}}.
 \label{chap23}
We define the Frobenius and the spectral norm of a matrix as
$ \FNormS{\matA} = \sum_{i,j} \matA_{ij}^2$ and
$\TNorm{\matA} = \max_{\x:\TNorm{\x}=1}\TNorm{\matA \x}$, respectively.

\section{Deterministic Column Sampling}\label{sec:main}

In this section, we describe the details of the deterministic leverage score sampling algorithm.
In Section~\ref{sec:approx}, we state our approximation guarantees.
In the remaining of the text, given a matrix $\matA$ of rank $\rho,$
we assume that the ``target rank'' is $k < \rho$.
This means that we wish to approximate $\matA$
using a subset of $c \ge k$ of its columns, such that the resulting matrix has an error close to that of the best rank-$k$ approximation.

The deterministic leverage score sampling algorithm can be summarized in the following three steps:

\medskip
\noindent {\bf Step 1:} Obtain ${\bf V}_k$, the top-$k$ right singular vectors of $\matA$.
This can be carried by simply computing the singular value decomposition (SVD) of $\matA$ in
$O(\min\{m,n\} m n)$ time.

\medskip
\noindent {\bf Step 2:} Calculate the leverage scores $\ell_i^{(k)}$.
For simplicity, we assume that $\ell_i^{(k)}$ are sorted in descending order, hence the columns of ${\bf A}$ have the same sorting as well.\footnote{Otherwise, one needs to sort them in $O(n \log n)$ time-cost.}

\medskip
\noindent {\bf Step 3:} Output the $c$ columns of $\matA$ that correspond to the largest $c$ leverage scores $\ell_i^{(k)}$ such that their sum $\sum_{i=1}^c \ell_i^{(k)}$ is more than $\theta$.
This ensures that the selected columns have accumulated ``energy'' at least $\theta$.
In this step, we have to carefully pick $\theta$, our {\it stopping threshold}.
This parameter essentially controls the {\it quality of the approximation}. 

In Section~\ref{sec:theta}, we provide some guidance on how the stopping parameter $\theta$ should be chosen.
Note that, if $\theta$ is such that $c < k$, we force $c = k$.
This is a necessary step that prevents the error in the approximation from ``blowing up"~(see Section~\ref{sec:theta}). 
The exact steps are given in Algorithm \ref{alg1}. 

\begin{algorithm}[!htb]
   \caption{\texttt{LeverageScoresSampler$(\A, k, \theta  )$}}
\begin{algorithmic}[1]
   \STATEx {\bf Input:} $\A \in \R^{m \times n}, ~k, ~\theta$
   \STATE Compute $\!\matV_k \!\in \!\R^{n \times k}\!$ (top $k$ right sing. vectors of $\matA$)
   \STATEx {\textbf{for} $i = 1, 2, \dots, n$}
   \STATE ~~~~$\ell_i^{(k)} =\left\|[\matV_k]_{i,:} \right\|_2^2$
   \STATEx {\textbf{end for}}
   \STATEx {without loss of generality, let $\ell_i^{(k)}$'s be sorted:
   $$ \ell_1^{(k)} \ge \dots \ge \ell_i^{(k)} \ge \ell_{i+1}^{(k)} \ge \dots \ge \ell_n^{(k)}. $$}
   \STATE Find index $ c \in \left\{1, \dots, n\right\}$ such that:
   $$c = \argmin_{c} \left( \sum_{i=1}^{c} \ell_i^{(k)} > \theta \right). $$
   \STATE If $c < k,$ set $c = k$.
   \STATEx {\bf Output:} $\matS \in \R^{n \times c}$ s.t. $\matA \matS$ has the top $c$ columns of~$\matA$.
\end{algorithmic}
\label{alg1}
 \end{algorithm}

Algorithm~\ref{alg1} requires $O(\min\{m,n\} m n)$ arithmetic operations. 
In Section~\ref{sec:extensions}
we discuss modifications to this algorithm which improve the running time. 

\section{Approximation guarantees}\label{sec:approx}
Our main technical innovation is a bound on the approximation error of  Algorithm 1 in regard to the CSSP; 
the proof of the following theorem can be found in Section~\ref{sec:proofs}.

\begin{theorem}\label{thm1}
Let 
$\theta = k- \varepsilon,$ for some $\varepsilon \in (0,1)$, and 
let $\matS$ be the $n \times c$ output  sampling matrix of Algorithm 1.
Then, for $\matC=\matA \matS$ and $\xi=\{2,\text{F}\}$, we have 
$ \XNormS{\matA - \matC \pinv{\matC}\matA} < \left(1-\varepsilon\right)^{-1} \cdot \XNormS{\matA - \matA_k}. $
\end{theorem}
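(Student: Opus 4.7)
The plan is to combine a structural reduction to an oblique projection with a sharp bound on that projection's norm. Let $\matE := \matA - \matA_k$, let $J \subseteq \{1,\ldots,n\}$ be the set of $c$ indices selected by Algorithm~\ref{alg1}, and define $\matM := \matV_k^T\matS \in \R^{k \times c}$. The stopping rule, together with the safeguard $c \ge k$, guarantees $\sum_{j \in J} \ell_j^{(k)} > \theta = k - \varepsilon$, equivalently $\sum_{j \notin J} \ell_j^{(k)} < \varepsilon$. Writing
\begin{equation*}
\matM\matM^{T} \;=\; \matI_k \;-\; \sum_{j \notin J}[\matV_k]_{j,:}^T [\matV_k]_{j,:},
\end{equation*}
the subtracted matrix is positive semidefinite with trace $< \varepsilon$, hence its largest eigenvalue is $< \varepsilon$. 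Therefore $\sigma_{\min}^2(\matM) > 1 - \varepsilon$; in particular $\matM$ has full row rank and $\matM\pinv{\matM} = \matI_k$.

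Next I would introduce the oblique projector $\matP := \matS\pinv{\matM}\matV_k^T \in \R^{n \times n}$. The Moore-Penrose identity $\pinv{\matM}\matM\pinv{\matM} = \pinv{\matM}$ yields $\matP^2 = \matP$, while $\matA_k\matP = \matU_k\matSig_k\matM\pinv{\matM}\matV_k^T = \matA_k$; hence $\matA(\matI-\matP) = (\matA_k+\matE)(\matI-\matP) = \matE(\matI-\matP)$. Since every column of $\matA\matP = \matC\cdot\pinv{\matM}\matV_k^T$ lies in the column span of $\matC = \matA\matS$, the best-approximation property of the orthogonal projection gives, for both $\xi \in \{2,F\}$,
\begin{equation*}
\XNorm{\matA - \matC\pinv{\matC}\matA} \;\le\; \XNorm{\matA - \matA\matP} \;=\; \XNorm{\matE(\matI-\matP)} \;\le\; \XNorm{\matE}\cdot\TNorm{\matI-\matP},
\end{equation*}
where the last inequality uses submultiplicativity of the spectral norm against either norm (valid for $\xi = F$ as well as $\xi = 2$).

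The crux is to bound $\TNorm{\matI-\matP}$ sharply. Here I would invoke the classical identity $\TNorm{\matI-\matP} = \TNorm{\matP}$, valid for every non-trivial idempotent on a finite-dimensional inner-product space (Kato 1960; see Szyld 2006 for a survey of proofs). A short cyclic-trace calculation using $\matS^T\matS = \matI_c$ and $\matV_k^T\matV_k = \matI_k$ gives
\begin{equation*}
\TNormS{\matP} \;=\; \lambda_{\max}\bigl(\matS\pinv{\matM}\tpinv{\matM}\matS^T\bigr) \;=\; \TNormS{\pinv{\matM}} \;=\; 1/\sigma_{\min}^2(\matM) \;<\; 1/(1-\varepsilon).
\end{equation*}
Squaring the previously displayed norm inequality then delivers $\XNormS{\matA-\matC\pinv{\matC}\matA} < (1-\varepsilon)^{-1}\XNormS{\matA-\matA_k}$. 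I expect the main obstacle to be precisely this use of Kato's identity: the more immediate Pythagorean expansion $\FNormS{\matE(\matI-\matP)} = \FNormS{\matE} + \FNormS{\matE\matS\pinv{\matM}}$ combined with the naive bound $\FNormS{\matE\matS\pinv{\matM}} \le \FNormS{\matE}\cdot\TNormS{\pinv{\matM}}$ only yields the weaker factor $(2-\varepsilon)/(1-\varepsilon)$, so the sharper norm equality for oblique projections is genuinely required to obtain $(1-\varepsilon)^{-1}$.
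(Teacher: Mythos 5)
Your proof is correct and yields exactly the paper's bound, but it is self-contained where the paper leans on a cited black box. The singular-value step is the same in substance as the paper's Lemma~\ref{lem1}: there the authors write $\matI_k = \matV_k\transp\matS\matS\transp\matV_k + \matV_k\transp\hat\matS\hat\matS\transp\matV_k$ and apply a Weyl-type eigenvalue inequality together with $\TNormS{\matV_k\transp\hat\matS}\le\FNormS{\matV_k\transp\hat\matS} < k-\theta$, which is precisely your bound of the largest eigenvalue of the discarded Gram matrix by its trace. The genuine difference is the structural reduction. The paper quotes Lemma~\ref{lem:structural} (Lemma 3.1 of~\cite{BDM11a}) to get $\XNormS{\matA-\matC\pinv{\matC}\matA}\le\XNormS{\matA-\matA_k}\cdot\TNormS{\matS(\matV_k\transp\matS)^\dagger}$, and then uses $\TNormS{\matS(\matV_k\transp\matS)^\dagger}\le\TNormS{\matS}\cdot\TNormS{(\matV_k\transp\matS)^\dagger}=1/\sigma_k^2(\matV_k\transp\matS)$. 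You instead rebuild that inequality from scratch: the oblique projector $\matP=\matS\pinv{\matM}\matV_k\transp$ satisfies $\matP^2 = \matP$ and $\matA(\matI-\matP)=\matE(\matI-\matP)$ with $\matA\matP$ in the column span of $\matC$, and Kato's identity $\TNorm{\matI-\matP}=\TNorm{\matP}=\TNorm{\pinv{\matM}}$ closes the argument. What your route buys is transparency about where the sharp constant comes from --- as you correctly observe, the naive Pythagorean expansion combined with submultiplicativity only gives $(2-\varepsilon)/(1-\varepsilon)$, so the improvement to $(1-\varepsilon)^{-1}$ genuinely requires either the cited structural lemma or the norm identity for idempotents; what it costs is an appeal to that classical but nontrivial identity, which the paper instead outsources to~\cite{BDM11a}. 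All of your intermediate computations ($\matM\pinv{\matM}=\matI_k$ from full row rank, $\matA_k\matP=\matA_k$, the validity of $\XNorm{\matA - \matC\pinv{\matC}\matA}\le\XNorm{\matA-\matC\matX}$ for both norms, and $\TNormS{\matP}=\TNormS{\pinv{\matM}}$ via the orthonormality of the columns of $\matS$ and $\matV_k$) check out.
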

Choosing $\varepsilon \in (0, 1/2)$ implies $(1-\varepsilon)^{-1} \le 1 + 2 \varepsilon$ and, 
hence, we have a relative-error approximation:
$$ \XNormS{\matA - \matC \pinv{\matC}\matA} < (1 + 2 \varepsilon) \cdot \XNormS{\matA - \matA_k}.$$

\subsection{Bounding the number of sampled columns}
\label{sec:bounding_c}
Algorithm 1 extracts at least $c\ge k$ columns of $\matA$.
However, an upper bound on the number of output columns $c$ is not immediate.
We study such upper bounds below. 

From Theorem~\ref{thm1}, it is clear that the stopping parameter $\theta = k-\varepsilon$ directly controls the number of output columns $c$.
This number, extracted for a specific error requirement $\varepsilon$, depends on the decay of the leverage scores.
For example, if the leverage scores decay fast, then we intuitively expect  $\sum_{i=1}^{c} \ell_i^{(k)}=k-\varepsilon$ to be achieved for a ``small'' $c$.

Let us for example consider a case where the leverage scores follow an extremely fast decay:
\begin{align*}
\ell_{1}^{(k)}&=k-2k\cdot \varepsilon,\\
\ell_{2}^{(k)}&=\ldots = \ell_{2k}^{(k)} = \varepsilon,\\
\ell_{2k+1}^{(k)}&=\ldots = \ell_{n}^{(k)} = \frac{\varepsilon}{n-2k}.
\end{align*}
Then, in this case
$\sum_{i=1}^{2k} \ell_i^{(k)}=k-\varepsilon,$
and Algorithm 1 outputs the $c=2k$ columns of $\matA$ that correspond to the $2k$ largest  leverage scores.
Due to Theorem \ref{thm1}, this subset of columns ${\bf C}\in\mathbb{R}^{n\times 2k}$ comes with the following guarantee: 
$$ \XNormS{\matA - \matC \pinv{\matC}\matA} < \frac{1}{1-\varepsilon} \cdot \XNormS{\matA - \matA_k}.$$
Hence, from the above example, we expect that, when the leverage scores decay fast, 
a small number of columns of $\matA$ will offer a good approximation of the form $\matC \matC^\dagger\matA$. 

However, in the worst case  Algorithm 1 can output a number of columns $c$ that
can be as large as $\Omega(n)$. 
To highlight this subtle point, consider
the case where the leverage scores are uniform
$\ell_i^{(k)} = \frac{k}{n}.$
Then, one can easily observe that if we want to achieve an error of $\varepsilon$ according to Theorem \ref{thm1}, we have to set $\theta = k-\varepsilon$.
This directly implies that we need to sample
 $c > (n/k) \theta$ columns.
Hence, if $\varepsilon=o(1),$ then, 
$$c\ge (n/k) \theta= (1-\varepsilon/k) n= \Omega(n).$$
Hence, for $\varepsilon \rightarrow 0$ we have $c \rightarrow n,$ 
which makes the result of Theorem \ref{thm1} trivial.

We argued above that when the leverage scores decay is ``fast'' then a good approximation
is to be expected with a ''small'' c. We make this intuition precise below~\footnote{We chose
to analyze in detail the case where the leverage scores follow a power law decay; other models
for the leverage scores, example, exponential decay, are also interesting, and will be the subject
of the full version of this work.}. 
The next theorem considers the case where
the leverage scores follow a power-law decay; the proof can be found in Section~\ref{sec:proofs}. 

\begin{theorem}\label{thm2}
Let the leverage scores follow a power-law decay  with exponent $\alpha_k = 1 + \eta$, for $\eta > 0$:
$$\ell_i^{(k)} = \frac{\ell_1^{(k)}}{i^{\alpha_k}}.$$
Then, if we set the stopping parameter to $\theta = k-\varepsilon,$ for some $\varepsilon$ with $0 < \varepsilon < 1,$
the number of sampled columns in ${\bf C}={\bf A}{\bf S}$ that Algorithm 1 outputs is
$$
c = \max\left\{ \left(\frac{2k}{\varepsilon}\right)^{\frac{1}{1+\eta}}-1, \;\;\;\left(\frac{2k}{\eta \cdot \varepsilon}\right)^{\frac{1}{\eta}} -1, \;\;\; k\right\},
$$ 
and  ${\bf C}$ achieves the following approximation error
$$ \XNormS{\matA - \matC \pinv{\matC}\matA} < \frac{1}{1-\varepsilon} \cdot \XNormS{\matA - \matA_k}, ~~\text{for $\xi=\{2,\mathrm{F}\}$} .$$
\end{theorem}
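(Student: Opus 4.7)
The plan is as follows. Algorithm 1 selects the smallest index $c \ge k$ for which $\sum_{i=1}^c \ell_i^{(k)} > \theta = k-\varepsilon$. Since $\sum_{i=1}^n \ell_i^{(k)} = \FNormS{\matV_k} = k$, this stopping criterion is equivalent to requiring the tail
$$ T_c \;:=\; \sum_{i=c+1}^n \ell_i^{(k)} \;<\; \varepsilon.$$
So the proof reduces to exhibiting an explicit $c$ (in terms of $k$, $\eta$, $\varepsilon$) for which $T_c < \varepsilon$ is guaranteed; that $c$ will then be an upper bound on the sample count, and the error bound $\XNormS{\matA - \matC\pinv{\matC}\matA} < (1-\varepsilon)^{-1}\XNormS{\matA-\matA_k}$ follows immediately from Theorem~\ref{thm1}.

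The substantive step is to control $T_c$ under the power-law assumption $\ell_i^{(k)} = \ell_1^{(k)}/i^{1+\eta}$. Since $x \mapsto 1/x^{1+\eta}$ is decreasing, I would peel off the leading term of the tail and compare the remainder to an integral:
$$ T_c \;=\; \ell_1^{(k)} \sum_{i=c+1}^n \frac{1}{i^{1+\eta}} \;\le\; \frac{\ell_1^{(k)}}{(c+1)^{1+\eta}} \;+\; \ell_1^{(k)} \int_{c+1}^{\infty}\!\!\frac{dx}{x^{1+\eta}} \;=\; \frac{\ell_1^{(k)}}{(c+1)^{1+\eta}} \;+\; \frac{\ell_1^{(k)}}{\eta\,(c+1)^\eta}.$$
The virtue of this split is that it cleanly produces the two different exponents $1/(1+\eta)$ and $1/\eta$ that appear in the theorem's max.

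Next, I would use the crude but sufficient bound $\ell_1^{(k)} \le \sum_{i=1}^n \ell_i^{(k)} = k$, and then force each of the two summands above to be strictly less than $\varepsilon/2$. The first summand satisfies $\ell_1^{(k)}/(c+1)^{1+\eta} < \varepsilon/2$ whenever $c > (2k/\varepsilon)^{1/(1+\eta)} - 1$, and the second satisfies $\ell_1^{(k)}/(\eta(c+1)^\eta) < \varepsilon/2$ whenever $c > (2k/(\eta\varepsilon))^{1/\eta} - 1$. Taking the maximum of these two thresholds, and additionally with $k$ to respect the floor $c \ge k$ enforced in line~5 of Algorithm~1, yields the claimed value of $c$. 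Finally, the approximation error bound is obtained by plugging $\theta = k - \varepsilon$ into Theorem~\ref{thm1}, which directly gives $\XNormS{\matA - \matC\pinv{\matC}\matA} < (1-\varepsilon)^{-1}\XNormS{\matA-\matA_k}$ for $\xi \in \{2, \mathrm{F}\}$.

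The main conceptual step is recognizing the tail-splitting above; once that is in place, the algebra is routine. The one possible worry is that the identity $\sum_i \ell_i^{(k)} = k$ combined with the projection bound $\ell_1^{(k)} \le 1$ constrains the power-law parameters to be compatible (roughly $\eta \lesssim 1/k$), but this does not interfere with the formal statement, since we only need the upper bound $\ell_1^{(k)} \le k$ in the derivation.
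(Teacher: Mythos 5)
Your proposal is correct and follows essentially the same route as the paper: both arguments reduce the stopping rule to bounding the tail $\sum_{i=c+1}^n \ell_i^{(k)}$ via the integral test, split it into the two terms with exponents $1+\eta$ and $\eta$, and force each below $\varepsilon/2$ to obtain the two thresholds in the max (your use of $\ell_1^{(k)} \le \sum_i \ell_i^{(k)} = k$ is just a more direct way of getting the same constant the paper obtains from $\sum_{i=1}^n i^{-(1+\eta)} > 1$). The error bound then follows from Theorem~\ref{thm1} exactly as you say.
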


\subsection{Theoretical comparison to state of the art}
We compare the number of chosen columns $c$ in Algorithm 1 to the number of columns chosen in the randomized leverage scores sampling  
case~\cite{DMM08}.
The algorithm of  \cite{DMM08} requires 
$$c = O(k \log k / \varepsilon^2)$$ 
columns for a relative-error bound with respect to the Frobenius error in the CSSP:
$$ \FNormS{\matA - \matC \pinv{\matC}\matA} \le (1+\varepsilon) \|\matA-\matA_k\|_{\text{F}}^2.$$
Assuming the leverage scores follow a power-law decay, Algorithm 1 requires fewer columns for the same accuracy $\varepsilon$ when: 
$$ \max\left\{ \left(\frac{2k}{\varepsilon}\right)^{\frac{1}{1+\eta}}, ~\left(\frac{2k}{\eta \cdot \varepsilon}\right)^{\frac{1}{\eta}} \right\} <  C \cdot \frac{k\log k }{\varepsilon^2},$$
where $C$ is an absolute constant. 
Hence, under the power law decay,  Algorithm 1 offers provably a  matrix approximation similar or better than \cite{DMM08}.

Let us now compare the performance of Algorithm~\ref{alg1} with the results in~\cite{BDM11a}, which are the current
state of the art for the CSSP.  Theorem 1.5
in~\cite{BDM11a} provides a randomized algorithm which selects 
$$c = \frac{2k }{\varepsilon}(1+o(1))$$ columns in $\matC$
such that 
$$ \FNormS{\matA - \matC \pinv{\matC}\matA} < (1+\varepsilon) \cdot \FNormS{\matA - \matA_k}$$ 
holds in expectation.
This result is in fact optimal, up to a constant 2, since there is a lower bound indicating that such a relative error approximation 
is not possible unless 
$$c = k/\varepsilon,
$$ 
(see Section 9.2 in~\cite{BDM11a}). The approximation bound of Algorithm~\ref{alg1}  
is indeed better than the upper/lower bounds in~\cite{BDM11a} for any $\eta > 1$. We should note here that the lower bound
in~\cite{BDM11a} is for general matrices; however, the upper bound of Theorem~\ref{thm2} is applied to a specific class of matrices
whose leverage scores follow a power law decay. 

Next, we compare the spectral norm bound of Theorem~\ref{thm2} to the spectral norm bound of Theorem 1.1 in~\cite{BDM11a},
which indicates that there exists a deterministic algorithm selecting $c > k$ columns with error 
$$ \TNormS{\matA - \matC \pinv{\matC}\matA} < O\left(n/c \right) \cdot \TNormS{\matA - \matA_k}.$$
This upper bound is also tight, up to constants, since~\cite{BDM11a} provides a matching lower bound. Notice that
a relative error upper bound requires 
$$c = \Omega\left( n/(1+\varepsilon) \right)$$
in the general case. 
However, under the power law assumption
in Theorem~\ref{thm2}, we provide such a relative error bound with asymptotically fewer columns.
To our best knowledge, fixing $\eta$ to a constant, 
this is the \emph{first} relative-error bound for the spectral norm version of the CSSP
with 
$$c = \text{poly}(k,1/\varepsilon)$$
columns. 

\section{Experiments}
In this section, we first provide evidence that power law decays are prevalent in real-world data sets.
Then, we investigate the empirical performance of Algorithm~\ref{alg1} on real and synthetic data sets.

%


\begin{table*}[!htb]
\centering
\begin{scriptsize}
\begin{tabular}{c c c c c c c c c c c} \toprule
\multicolumn{1}{c}{Dataset} & \phantom{a} & \multicolumn{1}{c}{$m \times n$} & \phantom{a}  & \multicolumn{1}{c}{Description} & \phantom{a} & \multicolumn{1}{c}{Dataset} & \phantom{a} & \multicolumn{1}{c}{$m \times n$} & \phantom{a}  & \multicolumn{1}{c}{Description} \\
\cmidrule{1-1} \cmidrule{3-3} \cmidrule{5-5} \cmidrule{7-7}  \cmidrule{9-9} \cmidrule{11-11} 
Amazon & & $262111 \times 262111$ & & Purchase netw.~\cite{leskovec2009snap} & & Citeseer & & $723131 \times 723131$ & & Citation netw.~\cite{kunegis2013konect} \\
4square & & $106218 \times 106218$ & & Social netw.~\cite{Zafarani+Liu:2009} & & Github & & $56519 \times 120867$ & & Soft. netw. ~\cite{kunegis2013konect} \\
Gnutella & & $62586 \times 62586$ & & P2P netw.~\cite{leskovec2009snap} & & Google & & $875713 \times 875713$ & & Web conn. ~\cite{kunegis2013konect} \\
Gowalla & & $875713 \times 875713$ & & Social netw.~\cite{kunegis2013konect} & & LJournal & & $4847571 \times 4847571$ & & Social netw. ~\cite{leskovec2009snap} \\
Slashdot & & $82168 \times 82168$ & & Social netw.~\cite{leskovec2009snap} & & NIPS & & $12419 \times 1500$ & & Word/Docs ~\cite{Bache+Lichman:2013} \\
Skitter & & $1696415 \times 1696415$ & & System netw.~\cite{kunegis2013konect} & & CT slices & & $386 \times 53500$ & & CT images ~\cite{Bache+Lichman:2013} \\
Cora & & $23166 \times 23166$ & & Citation netw.~\cite{kunegis2013konect} & & Writer & & $81067 \times 42714$ & & Writers/Works ~\cite{kunegis2013konect} \\
Youtube & & $1134890 \times 1134890$ & & Video netw.~\cite{leskovec2009snap} & & YT groups & & $94238 \times 30087$ & & Users/Groups ~\cite{kunegis2013konect} \\
\bottomrule
\end{tabular}
\caption{Summary of datasets used in the experiments of Subsection~\ref{sec:exp_plaw}} \label{tbl:datasets2}

\end{scriptsize}
\end{table*}

Our experiments are not meant to be exhaustive; however, they provide clear evidence that:
$(i)$ the leverage scores of real world matrices indeed follow ``sharp'' power law decays; and
$(ii)$ deterministic leverage score sampling in such matrices is particularly effective.

\begin{figure*}[!htb]
\centering \includegraphics[width=0.9\textwidth]{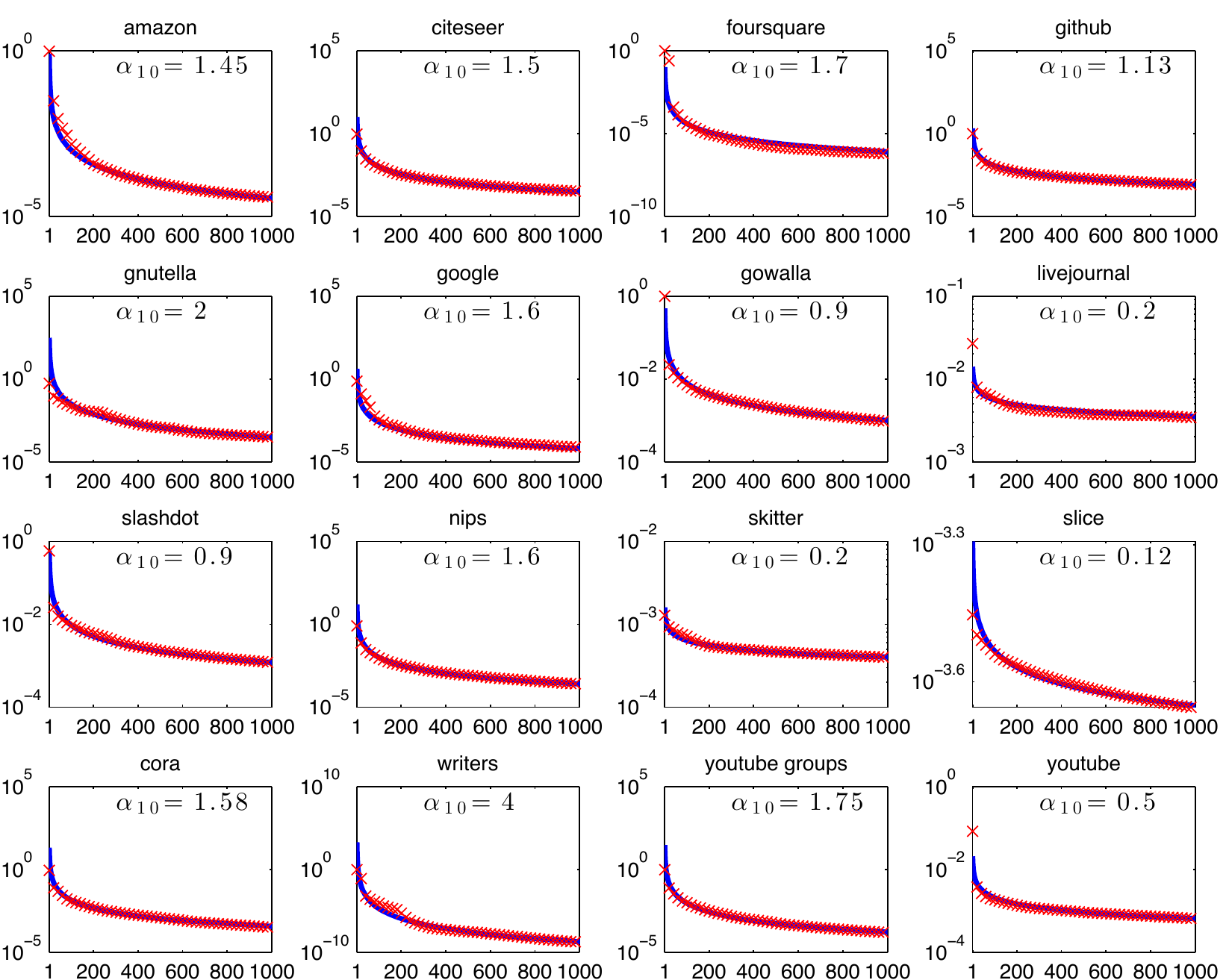}
\caption{We plot the top $1,000$ leverage scores for 16 different data sets, obtained through $\mathbf{V}_k$ for $k=10$.
The plots are in logarithmic scale.
For each data-set, we plot a fitting power-law curve $\beta \cdot x^{-\alpha_k}$.
The exponent is listed on each figure as $\alpha_{10}$.
The leverage scores are plotted with a red $\times$  marker, and the fitted curves are denoted with a solid blue line.
We observe that the power law fit offers a good approximation of the true leverage scores.
We further observe that many data sets exhibit sharp decays ($\alpha_k>1$), while only a few have leverage scores that decay slowly ($\alpha_k<1$).
}
\label{fig:plaw_fits}
\end{figure*}

\begin{figure*}[!htb]
\centerline{\includegraphics[width=0.9\textwidth]{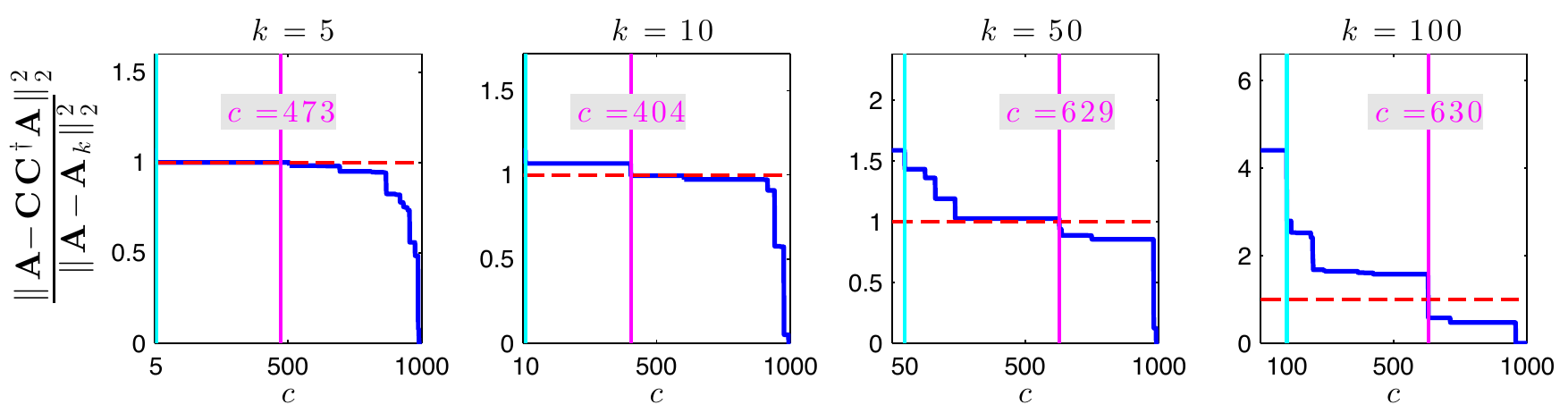}}
\caption{
Nearly-uniform leverage scores case:
Here, we plot as a blue curve the relative error 
 ratio $\|\matA - \matC \matC^\dagger\matA\|_2^2/ \|\matA - \matA_k\|_2^2$  achieved by Algorithm 1 as a function of the output columns $c$.
The leftmost vertical cyan line corresponds to the point where $k=c$. When $c<k$ the output error can be large; this justifies why we enforce the algorithm to output $c\ge k$ columns.
The rightmost vertical magenta line indicates the point where the $c$ sampled columns offer as good an approximation as that of the best rank-$k$ matrix ${\bf A}_k$.
 }\label{fig1}
\end{figure*}

\subsection{Power-law decays in real data sets}
\label{sec:exp_plaw}
We demonstrate the leverage score decay behavior of many real-world data sets.
These range from social networks and product co-purchasing matrices 
to document-term bag-of-words data sets, citation networks, and  medical imaging samples.
Their dimensions vary from thousands to millions of variables. The data-set description is given in Table~\ref{tbl:datasets2}.

In Figure~\ref{fig:plaw_fits}, we plot the top $1,000$ leverage scores extracted from the matrix of the right top-$k$ singular vectors ${\bf V}_k$.
In all cases we set $k=10$.\footnote{We performed various experiments for larger $k$, {\it e.g.,} $k=30$ or $k=100$ (not shown due to space limitations). 
We found that as we move towards higher $k$, we observe a ``smoothing" of the speed of decay.
This is to be expected, since for the case of $k = \rank(\matA)$ all leverage scores are equal.
}
For each dataset, we plot a fitting power-law curve of the form $\beta \cdot x^{-\alpha_k}$, where $\alpha_k$ is the exponent of interest.

We can see from the plots that a power law indeed seems to closely match the behavior of the top leverage scores.
What is more interesting is that for many of our data sets we observe a decay exponent of 
$\alpha_k>1$: this is the regime where deterministic sampling is expected to perform well.
It seems that these sharp decays are naturally present in many real-world data sets.

We would like to note that as we move to smaller scores (i.e., after the $10,000$-th score), 
we empirically observe that the leverage scores tail usually decays much faster than a power law.
This only helps the bound of Theorem 2.


\subsection{Synthetic Experiments}
\label{sec:exp_synthetic}
In this subsection, we are interested in understanding the performance of Algorithm~\ref{alg1} on matrices with {\it (i)} uniform and {\it (ii)} power-law decaying leverage scores.

To generate matrices with a prescribed leverage score decay, we use the implementation of
~\cite{ipsen2012effect}.
Let $\matV_k \in \R^{n \times k}$ denote the matrix we want to construct, for some $k< n$. 
Then, \cite{ipsen2012effect} provides algorithms to generate tall-and-skinny orthonormal matrices with specified row norms (i.e., leverage scores). 
Given the $\matV_k$ that is the output of the matrix generation algorithm in \cite{ipsen2012effect},
we run a basis completion algorithm to find the perpendicular matrix $\matV_k^{\perp} \in \R^{n \times (n-k)}$ such that
$\matV_k\transp \matV_k^{\perp} = {\bf 0}_{k \times (n-k)}.$
Then, we create an $n \times n$ matrix 
$\matV = [\matV_k \matV_k^{\perp}] $
where the first $k$ columns of $\matV$ are the columns of $\matV_k$ and the rest $n-k$ columns are the columns of $\matV_k^{\perp}$; hence, $\matV$ is a full orthonormal basis. 
Finally we generate $\matA \in \R^{m \times n}$ as $ \matA = \matU \matSig \matV\transp;$
where $\matU \in \mathbb{R}^{m \times m}$ is any orthonormal matrix, and $\matSig \in \mathbb{R}^{m \times n}$ any diagonal matrix
with $\min\{m,n\}$ positive entries along the main diagonal. 
Therefore, $\matA = \matU \matSig \matV\transp$ is the full SVD of $\matA$ with
leverage scores equal to the squared $\ell_2$-norm of the rows of $\matV_k$.
In our experiments, we pick $\matU$ as an orthonormal
basis for an $m \times m$ matrix where each entry is chosen i.i.d. from the Gaussian distribution. Also, 
$\matSig$ contains $\min\{m,n\}$ positive entries (sorted) along its main diagonal, where each entry was chosen i.i.d. from the Gaussian distribution.

\begin{figure*}[~!htb]
\centerline{\includegraphics[width=0.9\textwidth]{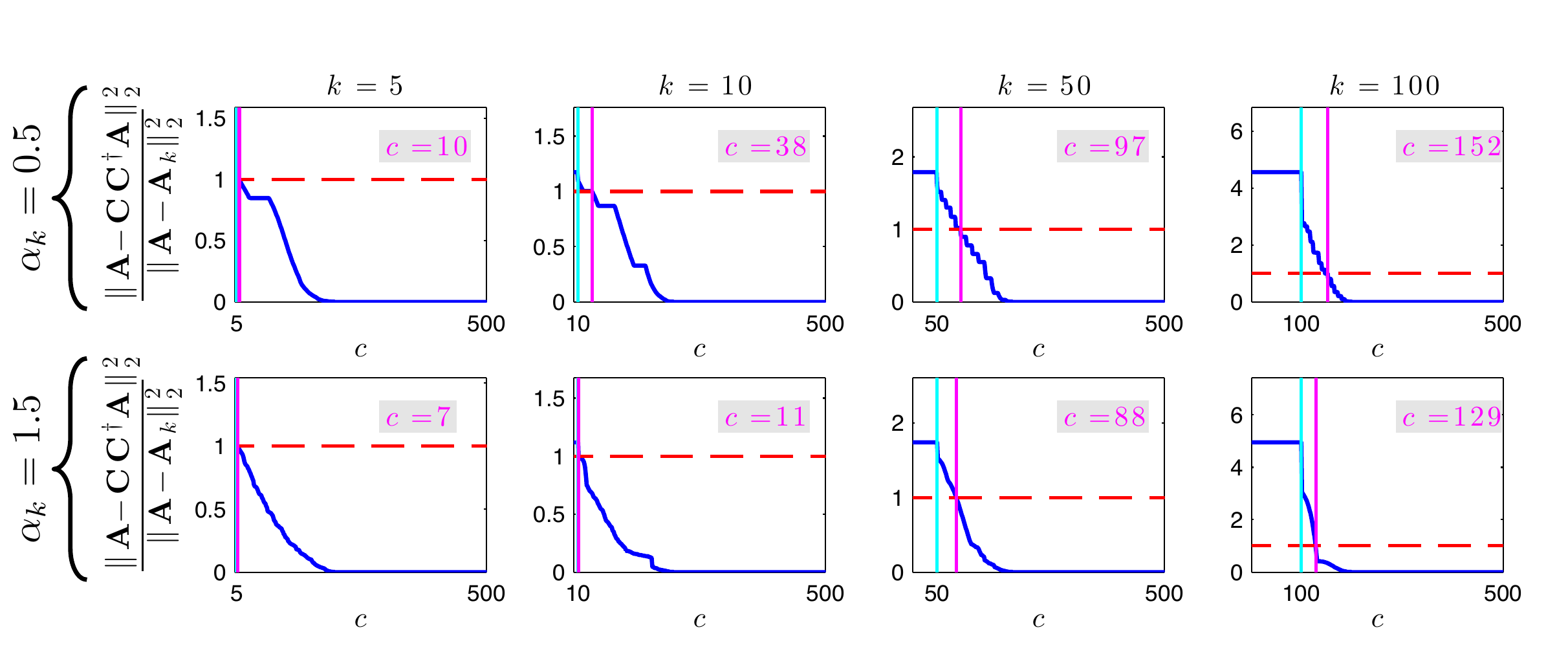}}
\caption{Power-law decaying leverage scores case:
We choose two power-law exponents: $\alpha_{k} = 0.5$ and $\alpha_{k} =1.5$.
In the first row we plot the relative error of Algorithm 1  vs. $c$ for the first decay profile, and the second row is the error performance of Algorithm 1 for the second, sharpest decay profile.
The vertical cyan line corresponds to the point where $k=c$,
and the vertical magenta line indicates the point where the $c$ sampled columns offer a better approximation compared to the best rank-$k$ matrix ${\bf A}_k$.
 }
 \label{fig2}
\end{figure*}

\subsubsection{ Nearly-uniform scores}
We set the number of rows to $m=200$ and the number of columns to $n=1000$ and construct
$\matA  = \matU \matSig \matV\transp \in \R^{m \times n}$ as described above.
The row norms of $\matV_k$ are chosen as follows: First, all row norms are chosen equal to $k/n$, for some fixed $k$.
Then, we introduce a small perturbation to avoid singularities: for every other pair of rows we add $\beta \in \mathcal{N}\left(0,1/100\right)$ to a row norm and subtract the same $\beta$ from the other row norm -- hence the sum of $\ell_i^{(k)}$ equals to $k$.

We set $k$ to take the values $\lbrace 5, 10, 50, 100 \rbrace $ and for each $k$ we choose:
$c = \{1,2,\ldots,1000\}$.
We present our findings in Figure~\ref{fig1},  where
we plot the relative error achieved 
$\frac{ \|\matA - \matC \matC^\dagger\matA\|_2^2}{\|\matA - \matA_k\|_2^2},$
where the $n\times c$ matrix ${\bf C}$ contains the first $c$ columns of $\matA$ that correspond to the $k$ largest leverage scores of ${\bf V}_k$, as sampled by Algorithm~\ref{alg1}.
Then, the leftmost vertical cyan line corresponds to the point where $k=c$,
and the rightmost vertical magenta line indicates the point where the $c$ sampled columns achieve an error of $\|\matA - \matA_k\|_2^2$, where $\matA_k$ is the best rank-$k$ approximation.

In the plots of Figure~\ref{fig1}, we see that as we move to larger values of $k$, if we wish to achieve an error of $ \|\matA - \matC \matC^\dagger\matA\|_2^2\approx \|\matA - \matA_k\|_2^2$, then we need to keep in ${\bf C}$, approximately almost half the columns of $\matA$.
This agrees with the uniform scores example that we showed earlier in Subsection \ref{sec:bounding_c}. 
However, we  observe that
Algorithm 1 can obtain a moderately small relative error, with significantly smaller $c$.
See for example the case where $k=100$; then, 
$c\approx 200$ sampled columns suffice for a relative error approximately equal to $2$, {\it i.e.,} $ \|\matA - \matC \matC^\dagger\matA\|_2^2\approx 2\cdot \|\matA - \matA_k\|_2^2$.
This indicates that our analysis could be loose in the general case.

\subsubsection{Power-law decay}
In this case, our synthetic eigenvector matrices ${\bf V}_k$ have leverage scores that follow a power law decay.
We choose two power-law exponents: $\alpha_{k} = 0.5$ and $\alpha_{k} = 1.5$.
Observe that the latter complies with Theorem~\ref{thm2}, that predicts the near optimality of leverage score sampling under such decay.

In the first row of Figure \ref{fig2}, we plot the relative error vs. the number of output columns $c$ of Algorithm 1 for $\alpha_k=0.5$.
Then, in the second row of Figure \ref{fig2}, we plot the relative error vs. the number of output columns $c$ of Algorithm 1 for $\alpha_k=1.5$.
The blue line represents the relative error in terms of spectral norm.
We can see that the performance of Algorithm 1 in the case of the fast decay is surprising: $c \approx 1.5\cdot k$ suffices for an approximation as good as of that of the best rank-$k$ approximation. This confirms the approximation performance in Theorem~\ref{thm2}.

\newcommand{\ra}[1]{\renewcommand{\arraystretch}{#1}}

\subsection{Comparison with other techniques}
\label{sec:exp_real}
We will now compare the proposed algorithm to state of the art approaches for the CSSP,  both for $\xi = 2$ and $\xi = \text{F}$.
We report results for the errors $\XNormS{\matA - \matC \matC^\dagger\matA} / \XNormS{\matA - \matA_k}$.
A comparison of the running time complexity of those algorithms is out of the scope of our experiments. 

Table \ref{tbl:datasets} contains a brief description of the datasets used in our experiments. We employ the datasets used in~\cite{GM13}, which presents exhaustive experiments for matrix approximations obtained through randomized leverage scores sampling. 

\begin{table*}[!htb]
\centering
\ra{1.3}
\begin{small}
\begin{tabular}{l c c c c c c c c} \toprule
\multicolumn{1}{c}{Dataset} & \phantom{ab} & \multicolumn{1}{c}{$m$} & \phantom{ab} & \multicolumn{1}{c}{$n$} &  \phantom{ab}  & \multicolumn{1}{c}{$\text{rank}(\mathbf{A})$} &  \phantom{ab}  & \multicolumn{1}{c}{Description} \\
\cmidrule{1-1} \cmidrule{3-3} \cmidrule{5-5} \cmidrule{7-7}  \cmidrule{9-9}  
\multicolumn{1}{c}{\texttt{Protein}} & & \multicolumn{1}{c}{$357$} & & \multicolumn{1}{c}{$6621$} & & \multicolumn{1}{c}{$356$}  & & \multicolumn{1}{c}{Saccharomyces cerevisiae dataset}\\ 
\multicolumn{1}{c}{\texttt{SNPS}} & & \multicolumn{1}{c}{$46$} & & \multicolumn{1}{c}{$5523$} & & \multicolumn{1}{c}{$46$}  & & \multicolumn{1}{c}{Single Nucleotide - polymorphism dataset}\\ 
\multicolumn{1}{c}{\texttt{Enron}} & & \multicolumn{1}{c}{$3000$} & & \multicolumn{1}{c}{$3000$} & & \multicolumn{1}{c}{$2569$}  & & \multicolumn{1}{c}{A subgraph of the Enron email graph}\\ 
\bottomrule
\end{tabular}
\end{small}
\caption{Summary of datasets used in the experiments of Subsection \ref{sec:exp_real} \cite{GM13}} \label{tbl:datasets}
\end{table*}

\begin{table*}[!ht]
\centering
\ra{1.3}
\begin{scriptsize}
\begin{tabular}{l c c c c c c c c c c c} \toprule
\multicolumn{3}{c}{Model} & \phantom{ab} & \multicolumn{3}{c}{$\frac{\|\matA - \matC\matC^\dagger\matA\|_2}{\|\matA - \matA_k\|_2}$} & \phantom{ab} & \multicolumn{4}{c}{$\frac{\|\matA - \matC\matC^\dagger\matA\|_\text{F}}{\|\matA - \matA_k\|_\text{F}}$} \\
\cmidrule{1-3} \cmidrule{5-7} \cmidrule{9-12}
\multicolumn{1}{c}{} & \multicolumn{1}{c}{$k$} & \multicolumn{1}{c}{$c$} & & \multicolumn{1}{c}{\cite{DMM08}} & \multicolumn{1}{c}{\cite{Gol65}} & \multicolumn{1}{c}{This work} & & \multicolumn{1}{c}{\cite{BDM11a}} & \multicolumn{1}{c}{\cite{DMM08}} & \multicolumn{1}{c}{\cite{Gol65}} & \multicolumn{1}{c}{This work} \\
\midrule
\multirow{10}{*}{\texttt{Protein}} & \multirow{5}{*}{$50$} & \multicolumn{1}{c}{$51$} & & \multicolumn{1}{c}{$\textcolor[rgb]{0.4,0.1,0}{\mathbf{1.7334}}$} & \multicolumn{1}{c}{$2.2621$} & \multicolumn{1}{c}{$2.6809$} & & \multicolumn{1}{c}{$1.1068$} & \multicolumn{1}{c}{$1.0888$} & \multicolumn{1}{c}{$1.1017$} & \multicolumn{1}{c}{$\textcolor[rgb]{0.4,0.1,0}{\mathbf{1.1000}}$} \\ 
& & \multicolumn{1}{c}{$118$} & & \multicolumn{1}{c}{$\textcolor[rgb]{0.4,0.1,0}{\mathbf{1.3228}}$} & \multicolumn{1}{c}{$1.4274$} & \multicolumn{1}{c}{$2.2536$} & & \multicolumn{1}{c}{$0.9344$} & \multicolumn{1}{c}{$\textcolor[rgb]{0.4,0.1,0}{\mathbf{0.9233}}$} & \multicolumn{1}{c}{$0.9258$} & \multicolumn{1}{c}{$0.9259$} \\ 
& & \multicolumn{1}{c}{$186$} & & \multicolumn{1}{c}{$1.0846$} & \multicolumn{1}{c}{$\textcolor[rgb]{0.4,0.1,0}{\mathbf{1.0755}}$} & \multicolumn{1}{c}{$1.7357$} & & \multicolumn{1}{c}{$0.7939$} & \multicolumn{1}{c}{$0.7377$} & \multicolumn{1}{c}{$0.7423$} & \multicolumn{1}{c}{$\textcolor[rgb]{0.4,0.1,0}{\mathbf{0.7346}}$} \\ 
& & \multicolumn{1}{c}{$253$} & & \multicolumn{1}{c}{$\textcolor[rgb]{0.4,0.1,0}{\mathbf{0.9274}}$} & \multicolumn{1}{c}{$0.9281$} & \multicolumn{1}{c}{$1.3858$} & & \multicolumn{1}{c}{$0.6938$} & \multicolumn{1}{c}{$0.5461$} & \multicolumn{1}{c}{$0.5326$} & \multicolumn{1}{c}{$\textcolor[rgb]{0.4,0.1,0}{\mathbf{0.5264}}$} \\ 
& & \multicolumn{1}{c}{$320$} & & \multicolumn{1}{c}{$0.7899$} & \multicolumn{1}{c}{$\textcolor[rgb]{0.4,0.1,0}{\mathbf{0.7528}}$} & \multicolumn{1}{c}{$0.8176$} & & \multicolumn{1}{c}{$0.5943$} & \multicolumn{1}{c}{$0.2831$} & \multicolumn{1}{c}{$0.2303$} & \multicolumn{1}{c}{$\textcolor[rgb]{0.4,0.1,0}{\mathbf{0.2231}}$} \\ 
\cmidrule{2-12}
& \multirow{5}{*}{$100$} & \multicolumn{1}{c}{$101$} & & \multicolumn{1}{c}{$1.8568$} & \multicolumn{1}{c}{$\textcolor[rgb]{0.4,0.1,0}{\mathbf{1.8220}}$} & \multicolumn{1}{c}{$2.5666$} & & \multicolumn{1}{c}{$1.1789$} & \multicolumn{1}{c}{$\textcolor[rgb]{0.4,0.1,0}{\mathbf{1.1506}}$} & \multicolumn{1}{c}{$1.1558$} & \multicolumn{1}{c}{$1.1606$} \\ 
& & \multicolumn{1}{c}{$156$} & & \multicolumn{1}{c}{$\textcolor[rgb]{0.4,0.1,0}{\mathbf{1.3741}}$} & \multicolumn{1}{c}{$1.3987$} & \multicolumn{1}{c}{$2.4227$} & & \multicolumn{1}{c}{$0.9928$} & \multicolumn{1}{c}{$\textcolor[rgb]{0.4,0.1,0}{\mathbf{0.9783}}$} & \multicolumn{1}{c}{$0.9835$} & \multicolumn{1}{c}{$0.9820$} \\ 
& & \multicolumn{1}{c}{$211$} & & \multicolumn{1}{c}{$1.3041$} & \multicolumn{1}{c}{$\textcolor[rgb]{0.4,0.1,0}{\mathbf{1.1926}}$} & \multicolumn{1}{c}{$2.3122$} & & \multicolumn{1}{c}{$0.8182$} & \multicolumn{1}{c}{$0.8100$} & \multicolumn{1}{c}{$0.7958$} & \multicolumn{1}{c}{$\textcolor[rgb]{0.4,0.1,0}{\mathbf{0.7886}}$} \\ 
& & \multicolumn{1}{c}{$265$} & & \multicolumn{1}{c}{$\textcolor[rgb]{0.4,0.1,0}{\mathbf{1.0270}}$} & \multicolumn{1}{c}{$1.0459$} & \multicolumn{1}{c}{$2.0509$} & & \multicolumn{1}{c}{$0.6241$} & \multicolumn{1}{c}{$0.6004$} & \multicolumn{1}{c}{$0.5820$} & \multicolumn{1}{c}{$\textcolor[rgb]{0.4,0.1,0}{\mathbf{0.5768}}$} \\ 
& & \multicolumn{1}{c}{$320$} & & \multicolumn{1}{c}{$0.9174$} & \multicolumn{1}{c}{$\textcolor[rgb]{0.4,0.1,0}{\mathbf{0.8704}}$} & \multicolumn{1}{c}{$1.8562$} & & \multicolumn{1}{c}{$0.3752$} & \multicolumn{1}{c}{$0.3648$} & \multicolumn{1}{c}{$\textcolor[rgb]{0.4,0.1,0}{\mathbf{0.2742}}$} & \multicolumn{1}{c}{$0.2874$} \\ 
\midrule
\multirow{10}{*}{\texttt{SNPS}} & \multirow{5}{*}{$5$} & \multicolumn{1}{c}{$6$} & & \multicolumn{1}{c}{$\textcolor[rgb]{0.4,0.1,0}{\mathbf{1.4765}}$} & \multicolumn{1}{c}{$1.5030$} & \multicolumn{1}{c}{$1.5613$} & & \multicolumn{1}{c}{$1.1831$} & \multicolumn{1}{c}{$\textcolor[rgb]{0.4,0.1,0}{\mathbf{1.0915}}$} & \multicolumn{1}{c}{$1.1030$} & \multicolumn{1}{c}{$1.1056$} \\ 
& & \multicolumn{1}{c}{$12$} & & \multicolumn{1}{c}{$1.2601$} & \multicolumn{1}{c}{$\textcolor[rgb]{0.4,0.1,0}{\mathbf{1.2402}}$} & \multicolumn{1}{c}{$1.2799$} & & \multicolumn{1}{c}{$1.0524$} & \multicolumn{1}{c}{$0.9649$} & \multicolumn{1}{c}{$0.9519$} & \multicolumn{1}{c}{$\textcolor[rgb]{0.4,0.1,0}{\mathbf{0.9469}}$} \\ 
& & \multicolumn{1}{c}{$18$} & & \multicolumn{1}{c}{$1.0537$} & \multicolumn{1}{c}{$\textcolor[rgb]{0.4,0.1,0}{\mathbf{1.0236}}$} & \multicolumn{1}{c}{$1.1252$} & & \multicolumn{1}{c}{$1.0183$} & \multicolumn{1}{c}{$0.8283$} & \multicolumn{1}{c}{$\textcolor[rgb]{0.4,0.1,0}{\mathbf{0.8187}}$} & \multicolumn{1}{c}{$0.8281$} \\ 
& & \multicolumn{1}{c}{$24$} & & \multicolumn{1}{c}{$\textcolor[rgb]{0.4,0.1,0}{\mathbf{0.8679}}$} & \multicolumn{1}{c}{$0.9063$} & \multicolumn{1}{c}{$0.9302$} & & \multicolumn{1}{c}{$0.9537$} & \multicolumn{1}{c}{$0.6943$} & \multicolumn{1}{c}{$\textcolor[rgb]{0.4,0.1,0}{\mathbf{0.6898}}$} & \multicolumn{1}{c}{$0.6975$} \\ 
& & \multicolumn{1}{c}{$30$} & & \multicolumn{1}{c}{$\textcolor[rgb]{0.4,0.1,0}{\mathbf{0.7441}}$} & \multicolumn{1}{c}{$0.7549$} & \multicolumn{1}{c}{$0.8742$} & & \multicolumn{1}{c}{$0.9558$} & \multicolumn{1}{c}{$0.5827$} & \multicolumn{1}{c}{$\textcolor[rgb]{0.4,0.1,0}{\mathbf{0.5413}}$} & \multicolumn{1}{c}{$0.5789$} \\ 
\cmidrule{2-12}
& \multirow{5}{*}{$10$} & \multicolumn{1}{c}{$11$} & & \multicolumn{1}{c}{$1.6459$} & \multicolumn{1}{c}{$\textcolor[rgb]{0.4,0.1,0}{\mathbf{1.5206}}$} & \multicolumn{1}{c}{$1.6329$} & & \multicolumn{1}{c}{$1.2324$} & \multicolumn{1}{c}{$1.1708$} & \multicolumn{1}{c}{$1.1500$} & \multicolumn{1}{c}{$\textcolor[rgb]{0.4,0.1,0}{\mathbf{1.1413}}$} \\ 
& & \multicolumn{1}{c}{$16$} & & \multicolumn{1}{c}{$\textcolor[rgb]{0.4,0.1,0}{\mathbf{1.3020}}$} & \multicolumn{1}{c}{$1.4265$} & \multicolumn{1}{c}{$1.5939$} & & \multicolumn{1}{c}{$1.1272$} & \multicolumn{1}{c}{$1.0386$} & \multicolumn{1}{c}{$\textcolor[rgb]{0.4,0.1,0}{\mathbf{1.0199}}$} & \multicolumn{1}{c}{$1.0420$} \\ 
& & \multicolumn{1}{c}{$21$} & & \multicolumn{1}{c}{$1.2789$} & \multicolumn{1}{c}{$\textcolor[rgb]{0.4,0.1,0}{\mathbf{1.1511}}$} & \multicolumn{1}{c}{$1.1676$} & & \multicolumn{1}{c}{$1.0225$} & \multicolumn{1}{c}{$0.9170$} & \multicolumn{1}{c}{$\textcolor[rgb]{0.4,0.1,0}{\mathbf{0.8842}}$} & \multicolumn{1}{c}{$0.9011$} \\ 
& & \multicolumn{1}{c}{$25$} & & \multicolumn{1}{c}{$1.1022$} & \multicolumn{1}{c}{$\textcolor[rgb]{0.4,0.1,0}{\mathbf{1.0729}}$} & \multicolumn{1}{c}{$1.0935$} & & \multicolumn{1}{c}{$0.9838$} & \multicolumn{1}{c}{$0.8091$} & \multicolumn{1}{c}{$\textcolor[rgb]{0.4,0.1,0}{\mathbf{0.7876}}$} & \multicolumn{1}{c}{$0.8057$} \\ 
& & \multicolumn{1}{c}{$30$} & & \multicolumn{1}{c}{$0.9968$} & \multicolumn{1}{c}{$\textcolor[rgb]{0.4,0.1,0}{\mathbf{0.9256}}$} & \multicolumn{1}{c}{$1.0020$} & & \multicolumn{1}{c}{$0.8088$} & \multicolumn{1}{c}{$0.6636$} & \multicolumn{1}{c}{$\textcolor[rgb]{0.4,0.1,0}{\mathbf{0.6375}}$} & \multicolumn{1}{c}{$0.6707$} \\ 
\midrule
\multirow{21}{*}{\texttt{Enron}} & \multirow{5}{*}{$10$} & \multicolumn{1}{c}{$11$} & & \multicolumn{1}{c}{$2.2337$} & \multicolumn{1}{c}{$1.8320$} & \multicolumn{1}{c}{$\textcolor[rgb]{0.4,0.1,0}{\mathbf{1.7217}}$} & & \multicolumn{1}{c}{$1.1096$} & \multicolumn{1}{c}{$1.0992$} & \multicolumn{1}{c}{$1.0768$} & \multicolumn{1}{c}{$\textcolor[rgb]{0.4,0.1,0}{\mathbf{1.0704}}$} \\ 
& & \multicolumn{1}{c}{$83$} & & \multicolumn{1}{c}{$\textcolor[rgb]{0.4,0.1,0}{\mathbf{1.0717}}$} & \multicolumn{1}{c}{$1.0821$} & \multicolumn{1}{c}{$1.1464$} & & \multicolumn{1}{c}{$1.0123$} & \multicolumn{1}{c}{$0.9381$} & \multicolumn{1}{c}{$\textcolor[rgb]{0.4,0.1,0}{\mathbf{0.9094}}$} & \multicolumn{1}{c}{$0.9196$} \\ 
& & \multicolumn{1}{c}{$156$} & & \multicolumn{1}{c}{$0.8419$} & \multicolumn{1}{c}{$\textcolor[rgb]{0.4,0.1,0}{\mathbf{0.8172}}$} & \multicolumn{1}{c}{$0.8412$} & & \multicolumn{1}{c}{$1.0044$} & \multicolumn{1}{c}{$0.8692$} & \multicolumn{1}{c}{$\textcolor[rgb]{0.4,0.1,0}{\mathbf{0.8091}}$} & \multicolumn{1}{c}{$0.8247$} \\ 
& & \multicolumn{1}{c}{$228$} & & \multicolumn{1}{c}{$\textcolor[rgb]{0.4,0.1,0}{\mathbf{0.6739}}$} & \multicolumn{1}{c}{$0.6882$} & \multicolumn{1}{c}{$0.6993$} & & \multicolumn{1}{c}{$0.9984$} & \multicolumn{1}{c}{$0.8096$} & \multicolumn{1}{c}{$\textcolor[rgb]{0.4,0.1,0}{\mathbf{0.7311}}$} & \multicolumn{1}{c}{$0.7519$} \\ 
& & \multicolumn{1}{c}{$300$} & & \multicolumn{1}{c}{$0.6061$} & \multicolumn{1}{c}{$\textcolor[rgb]{0.4,0.1,0}{\mathbf{0.6041}}$} & \multicolumn{1}{c}{$0.6057$} & & \multicolumn{1}{c}{$1.0000$} & \multicolumn{1}{c}{$0.7628$} & \multicolumn{1}{c}{$\textcolor[rgb]{0.4,0.1,0}{\mathbf{0.6640}}$} & \multicolumn{1}{c}{$0.6837$} \\ 
\cmidrule{2-12}
& \multirow{5}{*}{$20$} & \multicolumn{1}{c}{$21$} & & \multicolumn{1}{c}{$2.1726$} & \multicolumn{1}{c}{$\textcolor[rgb]{0.4,0.1,0}{\mathbf{1.9741}}$} & \multicolumn{1}{c}{$2.1669$} & & \multicolumn{1}{c}{$1.1344$} & \multicolumn{1}{c}{$1.1094$} & \multicolumn{1}{c}{$\textcolor[rgb]{0.4,0.1,0}{\mathbf{1.0889}}$} & \multicolumn{1}{c}{$1.0931$} \\ 
& & \multicolumn{1}{c}{$91$} & & \multicolumn{1}{c}{$1.3502$} & \multicolumn{1}{c}{$\textcolor[rgb]{0.4,0.1,0}{\mathbf{1.3305}}$} & \multicolumn{1}{c}{$1.3344$} & & \multicolumn{1}{c}{$1.0194$} & \multicolumn{1}{c}{$0.9814$} & \multicolumn{1}{c}{$\textcolor[rgb]{0.4,0.1,0}{\mathbf{0.9414}}$} & \multicolumn{1}{c}{$0.9421$} \\ 
& & \multicolumn{1}{c}{$161$} & & \multicolumn{1}{c}{$1.0242$} & \multicolumn{1}{c}{$1.0504$} & \multicolumn{1}{c}{$\textcolor[rgb]{0.4,0.1,0}{\mathbf{1.0239}}$} & & \multicolumn{1}{c}{$0.9999$} & \multicolumn{1}{c}{$0.9004$} & \multicolumn{1}{c}{$\textcolor[rgb]{0.4,0.1,0}{\mathbf{0.8434}}$} & \multicolumn{1}{c}{$0.8484$} \\ 
& & \multicolumn{1}{c}{$230$} & & \multicolumn{1}{c}{$0.9099$} & \multicolumn{1}{c}{$0.9025$} & \multicolumn{1}{c}{$\textcolor[rgb]{0.4,0.1,0}{\mathbf{0.9006}}$} & & \multicolumn{1}{c}{$0.9730$} & \multicolumn{1}{c}{$0.8505$} & \multicolumn{1}{c}{$\textcolor[rgb]{0.4,0.1,0}{\mathbf{0.7655}}$} & \multicolumn{1}{c}{$0.7740$} \\ 
& & \multicolumn{1}{c}{$300$} & & \multicolumn{1}{c}{$0.8211$} & \multicolumn{1}{c}{$0.7941$} & \multicolumn{1}{c}{$\textcolor[rgb]{0.4,0.1,0}{\mathbf{0.7936}}$} & & \multicolumn{1}{c}{$0.9671$} & \multicolumn{1}{c}{$0.8037$} & \multicolumn{1}{c}{$\textcolor[rgb]{0.4,0.1,0}{\mathbf{0.6971}}$} & \multicolumn{1}{c}{$0.7087$} \\ 
\cmidrule{2-12}
& \multirow{5}{*}{$50$} & \multicolumn{1}{c}{$51$} & & \multicolumn{1}{c}{$2.6520$} & \multicolumn{1}{c}{$2.2788$} & \multicolumn{1}{c}{$\textcolor[rgb]{0.4,0.1,0}{\mathbf{2.2520}}$} & & \multicolumn{1}{c}{$1.1547$} & \multicolumn{1}{c}{$1.1436$} & \multicolumn{1}{c}{$\textcolor[rgb]{0.4,0.1,0}{\mathbf{1.1053}}$} & \multicolumn{1}{c}{$1.1076$} \\ 
& & \multicolumn{1}{c}{$113$} & & \multicolumn{1}{c}{$1.7454$} & \multicolumn{1}{c}{$\textcolor[rgb]{0.4,0.1,0}{\mathbf{1.6850}}$} & \multicolumn{1}{c}{$1.8122$} & & \multicolumn{1}{c}{$1.0350$} & \multicolumn{1}{c}{$1.0425$} & \multicolumn{1}{c}{$\textcolor[rgb]{0.4,0.1,0}{\mathbf{0.9902}}$} & \multicolumn{1}{c}{$0.9929$} \\ 
& & \multicolumn{1}{c}{$176$} & & \multicolumn{1}{c}{$\textcolor[rgb]{0.4,0.1,0}{\mathbf{1.3524}}$} & \multicolumn{1}{c}{$1.4199$} & \multicolumn{1}{c}{$1.4673$} & & \multicolumn{1}{c}{$0.9835$} & \multicolumn{1}{c}{$0.9718$} & \multicolumn{1}{c}{$\textcolor[rgb]{0.4,0.1,0}{\mathbf{0.8999}}$} & \multicolumn{1}{c}{$0.9011$} \\ 
& & \multicolumn{1}{c}{$238$} & & \multicolumn{1}{c}{$1.2588$} & \multicolumn{1}{c}{$\textcolor[rgb]{0.4,0.1,0}{\mathbf{1.2303}}$} & \multicolumn{1}{c}{$1.2450$} & & \multicolumn{1}{c}{$0.9607$} & \multicolumn{1}{c}{$0.9187$} & \multicolumn{1}{c}{$\textcolor[rgb]{0.4,0.1,0}{\mathbf{0.8251}}$} & \multicolumn{1}{c}{$0.8282$} \\ 
& & \multicolumn{1}{c}{$300$} & & \multicolumn{1}{c}{$1.2209$} & \multicolumn{1}{c}{$\textcolor[rgb]{0.4,0.1,0}{\mathbf{1.1014}}$} & \multicolumn{1}{c}{$1.1239$} & & \multicolumn{1}{c}{$0.9384$} & \multicolumn{1}{c}{$0.8806$} & \multicolumn{1}{c}{$\textcolor[rgb]{0.4,0.1,0}{\mathbf{0.7593}}$} & \multicolumn{1}{c}{$0.7651$} \\ 
\cmidrule{2-12}
& \multirow{5}{*}{$100$} & \multicolumn{1}{c}{$101$} & & \multicolumn{1}{c}{$2.2502$} & \multicolumn{1}{c}{$\textcolor[rgb]{0.4,0.1,0}{\mathbf{2.2145}}$} & \multicolumn{1}{c}{$2.2721$} & & \multicolumn{1}{c}{$1.1938$} & \multicolumn{1}{c}{$1.1805$} & \multicolumn{1}{c}{$\textcolor[rgb]{0.4,0.1,0}{\mathbf{1.1223}}$} & \multicolumn{1}{c}{$1.1238$} \\ 
& & \multicolumn{1}{c}{$151$} & & \multicolumn{1}{c}{$2.2399$} & \multicolumn{1}{c}{$\textcolor[rgb]{0.4,0.1,0}{\mathbf{1.8677}}$} & \multicolumn{1}{c}{$1.8979$} & & \multicolumn{1}{c}{$1.0891$} & \multicolumn{1}{c}{$1.1122$} & \multicolumn{1}{c}{$\textcolor[rgb]{0.4,0.1,0}{\mathbf{1.0357}}$} & \multicolumn{1}{c}{$1.0393$} \\ 
& & \multicolumn{1}{c}{$201$} & & \multicolumn{1}{c}{$1.7945$} & \multicolumn{1}{c}{$1.6350$} & \multicolumn{1}{c}{$\textcolor[rgb]{0.4,0.1,0}{\mathbf{1.6332}}$} & & \multicolumn{1}{c}{$1.0236$} & \multicolumn{1}{c}{$1.0631$} & \multicolumn{1}{c}{$\textcolor[rgb]{0.4,0.1,0}{\mathbf{0.9646}}$} & \multicolumn{1}{c}{$0.9664$} \\ 
& & \multicolumn{1}{c}{$250$} & & \multicolumn{1}{c}{$1.6721$} & \multicolumn{1}{c}{$\textcolor[rgb]{0.4,0.1,0}{\mathbf{1.5001}}$} & \multicolumn{1}{c}{$1.5017$} & & \multicolumn{1}{c}{$0.9885$} & \multicolumn{1}{c}{$1.0026$} & \multicolumn{1}{c}{$\textcolor[rgb]{0.4,0.1,0}{\mathbf{0.9025}}$} & \multicolumn{1}{c}{$0.9037$} \\ 
& & \multicolumn{1}{c}{$300$} & & \multicolumn{1}{c}{$1.3946$} & \multicolumn{1}{c}{$\textcolor[rgb]{0.4,0.1,0}{\mathbf{1.3711}}$} & \multicolumn{1}{c}{$1.3847$} & & \multicolumn{1}{c}{$0.9485$} & \multicolumn{1}{c}{$0.9672$} & \multicolumn{1}{c}{$\textcolor[rgb]{0.4,0.1,0}{\mathbf{0.8444}}$} & \multicolumn{1}{c}{$0.8467$} \\ 
\bottomrule
\end{tabular}
\end{scriptsize}
\label{tbl:results1}
\caption{We present the performance of Algorithm 1 as compared to the state of the art in CSSP.
We run experiments on 3 data sets described in the above table, for various values of $k$ and $c$.
The performance of Algorithm 1, especially in terms of the Frobenius norm error, is very close to optimal, while at the same time similar, if not better, to the performance of the more sophisticated algorithms of the comparison.
}
\end{table*}

\subsubsection{List of comparison algorithms}
 We compare Algorithm~\ref{alg1} against three methods for the CSSP. First, the authors in~\cite{BDM11a} present a near-optimal deterministic algorithm, as described  in Theorem 
1.2 in ~\cite{BDM11a}.
Given $\matA, ~k < \rank(\matA)$ and $c > k$, the proposed algorithm selects 
$\tilde{c} \le c$ columns of $\matA$ in $\matC \in \R^{m \times \tilde{c}}$ with
$$
\FNorm{ \matA - \matC \pinv{\matC} \matA } \le \left(1 + \left(1-\sqrt{k/c} \right)^{-1} \right)\FNorm{\matA - \matA_k}.
$$

Second, in ~\cite{Gol65}, the authors present a deterministic pivoted QR algorithm such that: 
$$ \TNorm{ \matA - \matC \pinv{\matC} \matA } \le \left( 1 +\sqrt{n-k} \cdot 2^k  \right) \TNorm{\matA - \matA_k}. $$
This bound was proved in~\cite{GE96}. 
In our tests, we use the \texttt{qr}($\cdot$) built-in Matlab function, where one can select $c=k$ columns of $\matA$ as: 
\begin{align}
[\mathbf{Q},~\mathbf{R}, \boldsymbol{\pi}] &= \texttt{qr}(\matA,0); \nonumber \quad \matC = \matA_{:,\boldsymbol{\pi}_{1:c}},
\end{align} 
where $\matA = \mathbf{Q} \mathbf{R}$, $\mathbf{Q} \in \R^{m \times n}$ contains orthonormal columns,
$\matR \in \R^{n \times n}$ is upper triangular, 
and $\boldsymbol{\pi}$ is a permutation information vector such that $\matA_{:, \boldsymbol{\pi}} = \mathbf{Q}\mathbf{R}$. 

Third, we also consider the randomized leverage-scores sampling method with replacement, presented in~\cite{DMM08}. According to this work and given $\matA, k < \rank(\matA),$ and $c = \Omega(k \log k)$, the bound provided by the algorithm is 
$$ \FNorm{ \matA - \matC \pinv{\matC} \matA } \le \left( 1 + O\left( \sqrt{k \log k / c}  \right) \right) \FNorm{\matA - \matA_k}, $$
which holds only with constant probability. 
In our experiments, we use the software tool developed in~\cite{ipsen2012effect} for the randomized sampling step.

We use our own Matlab implementation for each of these approaches.  For \cite{DMM08}, we execute $10$ repetitions and report the one that minimizes the approximation error. 

\subsubsection{Performance Results}
Table 3 contains a subset of our results; a complete set of results is reserved for an extended version of this work. 
We observe that the performance of Algorithm~\ref{alg1} is particularly appealing; particularly, it is almost 
 as good as randomized leverage scores sampling in almost all cases - when randomized
 sampling is better the difference is often on the first or second decimal digit. 

Figure~\ref{fig:plaw_fits_2} shows the leverage scores for the three matrices used in our experiments. 
We see that although the decay for the first data sets does not fit a ``sharp'' power law as it is required in Theorem \ref{thm2}, the 
performance of the algorithm is still competitive in practice. 
Interestingly, we do observe good performance compared to the other algorithms for the third data set (Enron).
For this case, the power law decay fits the decay profile needed to establish the near optimality of Algorithm 1.

\begin{figure}[h]
\centering \includegraphics[width=1\columnwidth]{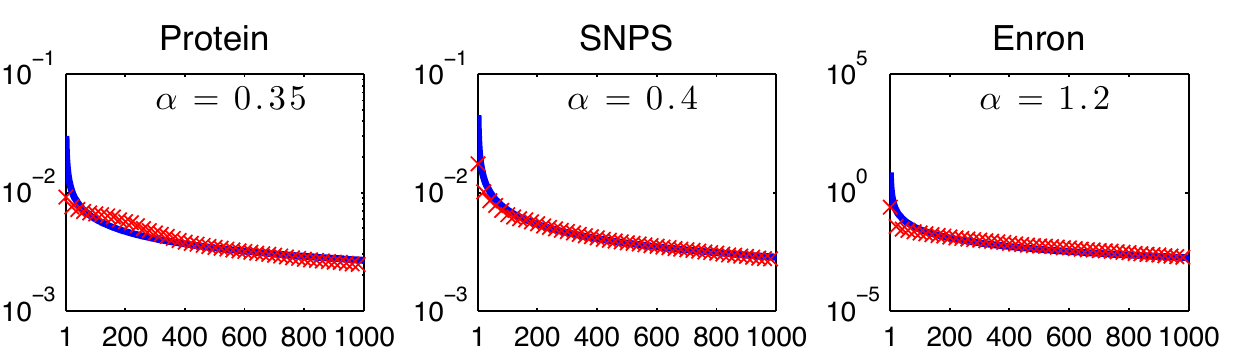}
\vspace{0.1cm}
\caption{
The plots are for $k=10$ and are in logarithmic scale. 
The exponent is listed on each figure as $\alpha$.
The leverage scores are plotted with a red $\times$  marker, and the fitted curves are denoted with a solid blue line.
}
\label{fig:plaw_fits_2}
\end{figure}

\section{Related work}\label{sec:related}
We give a quick overview of several column subset selection algorithms, both deterministic and randomized.

One of the first deterministic results regarding the CSSP
goes back to the seminal work of Gene Golub on pivoted QR factorizations~\cite{Gol65}.
Similar algorithms have been developed in~\cite{Gol65, HP92, CH94, CI94, GE96, Tyr96, Ste99, BQ98a, Pan00};
see also~\cite{BMD09a} for a recent survey. The best of these algorithms is the so-called \emph{Strong Rank-revealing QR} (Strong RRQR)
algorithm in~\cite{GE96}: Given
$\matA$, $c = k,$ and constant $f \ge 1,$ Strong RRQR requires $O(m n k \log_{f} n)$ arithmetic operations to find
$k$ columns of $\matA$ in $\matC \in \R^{m \times k}$ that satisfy
$$ \TNorm{\matA - \matC \pinv{\matC}\matA} \le \left(1 + f^2 \sqrt{k(n-k)+1} \right) \cdot \TNorm{\matA - \matA_k}. $$

As discussed in Section~\ref{sec:intro},
\cite{Jol72} suggests column sampling with the largest corresponding leverage scores.
A related result in~\cite{Tyr96} suggests column sampling through selection over $\matV_k\transp$ with Strong RRQR.
Notice that the leverage scores sampling approach is similar, but the column selection is based on the largest Euclidean norms of the  columns of $\matV_k\transp$.

From a probabilistic point of view, much work has followed the seminal work of~\cite{FKV04} for the CSSP.~\cite{FKV04} introduced the idea of randomly sampling columns based on specific probability distributions.
~\cite{FKV04} use a simple probability distribution where each column of $\matA$ is sampled with probability proportional to its Euclidean norm. The approximation bound achieved, which holds only in expectation, is
$$
\FNormS{\matA - \matC \pinv{\matC}\matA}  \le \FNormS{\matA - \matA_k}  +(k/c)\FNormS{\matA}.
$$
\cite{DMM08} improved upon the accuracy of this result by using a distribution over the columns of $\matA$ where each column is sampled with probability proportional to its leverage score.
From a different perspective, ~\cite{DR10,GS12} presented some optimal algorithms using volume sampling.
~\cite{BDM11a} obtained faster optimal algorithms while~\cite{BW13} proposed optimal algorithms that run in input sparsity time.

Another line of research includes row-sampling algorithms for tall-and-skinny orthonormal matrices, which is relevant to our results: we essentially apply this kind of sampling to the rows of the matrix $\matV_k$ from the SVD of $\matA$. See Lemma~\ref{lem1} in the Section~\ref{sec:proofs} for a precise statement of our result. Similar results exist in~\cite{AB13}. We should also mention the work in~\cite{zouzias2012matrix}, which corresponds to a derandomization of the randomized sampling algorithm in~\cite{DMM08}. 

\section{Proofs}{\label{sec:proofs}}
Before we proceed, we setup some notation and definitions. For any two matrices $\matA$ and $\matB$ with appropriate dimensions, \math{\TNorm{\matA}\le\FNorm{\matA}\le\sqrt{\rank(\matA)}\TNorm{\matA}}, $\FNorm{\matA\matB} \leq \FNorm{\matA}\TNorm{\matB}$, and $ \FNorm{\matA\matB} \leq \TNorm{\matA} \FNorm{\matB}$.
$\XNorm{\matA}$ indicates that an expression holds for both $\xi = 2, \mathrm{F}$.
The thin (compact) Singular Value Decomposition (SVD) of a matrix $\matA \in \mathbb{R}^{m \times n}$ with $\rank(\matA) = \rho$ is:
\begin{eqnarray*}
\label{svdA} \matA
         = \underbrace{\left(\begin{array}{cc}
             \matU_{k} & \matU_{\rho-k}
          \end{array}
    \right)}_{\matU_{\matA} \in \R^{m \times \rho}}
    \underbrace{\left(\begin{array}{cc}
             \matSig_{k} & \bf{0}\\
             \bf{0} & \matSig_{\rho - k}
          \end{array}
    \right)}_{\matSig_\matA \in \R^{\rho \times \rho}}
    \underbrace{\left(\begin{array}{c}
             \matV_{k}\transp\\
             \matV_{\rho-k}\transp
          \end{array}
    \right)}_{\matV_\matA\transp \in \R^{\rho \times n}},
\end{eqnarray*}
with singular values \math{\sigma_1\left(\matA\right)\ge\ldots\sigma_k\left(\matA\right)\geq\sigma_{k+1}\left(\matA\right)\ge\ldots\ge\sigma_\rho\left(\matA\right) > 0}.
The matrices $\matU_\matA \in \R^{m \times \rho}$ and $\matV_{\matA} \in \R^{m \times (\rho)}$ contain the left and right singular vectors, respectively.
It is well-known that $\matA_k=\matU_k \matSig_k \matV_k\transp = \matU_k \matU_k\transp\matA = \matA\matV_k \matV_k\transp \in \R^{m \times n}$ minimizes \math{\XNorm{\matA - \matX}} over all
matrices \math{\matX \in \R^{m \times n}} of rank at most $k \le \rank(\matA)$.
The best rank-$k$ approximation to $\matA$ satisfies $\TNorm{\matA-\matA_k} = \sigma_{k+1}(\matA)$ and
$\FNormS{\matA-\matA_k} = \sum_{i=k+1}^{\rho}\sigma_{i}^2(\matA)$.
$\pinv{\matA}$
denotes the Moore-Penrose pseudo-inverse of $\matA $.
Let $\matB \in \R^{m \times n}$ ($m \le n$) and $\matA=\matB\matB\transp \in \R^{m \times m}$; then, for all $i=1, ...,m$, $\lambda_{i}\left(\matA\right) = \sigma_{i}^2\left(\matB\right)$ is the $i$-th eigenvalue of $\matA$.

\subsection{Proof of Theorem~\ref{thm1}}
To prove Theorem~\ref{thm1}, we will use the following result.
\begin{lemma}\label{lem:structural}
[Eqn.~3.2, Lemma 3.1 in~\cite{BDM11a}]
Consider $\matA = \matA \matZ \matZ\transp + \matE \in \R^{m \times n}$ as a
low-rank matrix factorization of $\matA$, with $\matZ \in \R^{n \times k},$
and $\matZ\transp\matZ=\matI_{k}$.
Let $\matS\in\R^{n\times c}$ ($c \ge k$) be any matrix such that 
$$rank(\matZ\transp \matS) =k.$$
Let $\matC = \matA \matS \in \R^{m \times c}$. Then, for $\xi=2,\mathrm{F}:$
$$
\XNormS{\matA - \matC \pinv{\matC} \matA} 
\le
\XNormS{\matA - \Pi_{\matC,k}^\xi(\matA)}
\le
\XNormS{\matE} \cdot \TNormS{\matS (\matZ\transp \matS)^\dagger}.
$$
Here, $\Pi_{\matC,k}^\xi(\matA) \in \mathbb{R}^{m \times n}$ is the best rank \math{k} 
approximation to \math{\matA} in the column space of \math{\matC} with respect to the $\xi$ norm. 
\end{lemma}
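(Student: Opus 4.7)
The plan is to prove the two inequalities separately. The first, $\XNormS{\matA - \matC\pinv{\matC}\matA} \le \XNormS{\matA - \Pi_{\matC,k}^\xi(\matA)}$, is immediate: $\matC\pinv{\matC}\matA$ is the orthogonal projection of $\matA$ onto $\col(\matC)$, which minimizes $\|\matA - \matY\|_\xi$ for both $\xi \in \{2,\mathrm{F}\}$ over all $m \times n$ matrices $\matY$ whose columns lie in $\col(\matC)$, and $\Pi_{\matC,k}^\xi(\matA)$ belongs to that set.

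For the harder second inequality, I would exhibit the explicit rank-$k$ candidate $\matX = \matA \matS (\matZ\transp \matS)^\dagger \matZ\transp$, which lies in $\col(\matC)$ and has rank at most $k$ thanks to the $\matZ\transp$ factor. The hypothesis $\rank(\matZ\transp \matS) = k$ forces $(\matZ\transp\matS)(\matZ\transp\matS)^\dagger = \matI_k$, and substituting $\matA = \matA\matZ\matZ\transp + \matE$ into the definition of $\matX$ and simplifying gives, after the $\matZ$-in-the-middle cancellation,
$$\matA - \matX \;=\; \matE\bigl(\matI_n - \matQ\bigr), \qquad \matQ := \matS(\matZ\transp\matS)^\dagger \matZ\transp.$$
A direct check shows $\matQ^2 = \matQ$, so $\matQ$ is a (generally oblique) projection of rank $k$, and is non-trivial whenever $0 < k < n$. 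By the standard identity $\|\matI_n - \matQ\|_2 = \|\matQ\|_2$ valid for every non-trivial idempotent, combined with submultiplicativity and $\|\matZ\transp\|_2 = 1$ (orthonormal columns),
$$\|\matA - \matX\|_\xi \;\le\; \|\matE\|_\xi \cdot \|\matI_n - \matQ\|_2 \;=\; \|\matE\|_\xi \cdot \|\matQ\|_2 \;\le\; \|\matE\|_\xi \cdot \|\matS(\matZ\transp\matS)^\dagger\|_2.$$
Squaring and invoking optimality of $\Pi_{\matC,k}^\xi(\matA)$ delivers the claimed bound simultaneously for $\xi \in \{2,\mathrm{F}\}$.

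The main obstacle is the projection-norm identity $\|\matI_n - \matQ\|_2 = \|\matQ\|_2$, which is what sharpens the naive triangle-inequality bound $\|\matI_n - \matQ\|_2 \le 1 + \|\matQ\|_2$ to the clean product form appearing on the right-hand side of the lemma. If one prefers to avoid that identity, an alternative route in the Frobenius case is a Pythagorean-style decomposition exploiting $\matE\matZ = 0$ to annihilate the cross-term $\langle \matE, \matE\matQ\rangle_\mathrm{F} = \trace(\matZ\transp\matE\transp\matE\matS(\matZ\transp\matS)^\dagger) = 0$, paired with a direct spectral-norm argument for $\xi = 2$; this yields the bound up to a benign constant factor, which is sufficient for the applications of the lemma in Theorems~\ref{thm1} and \ref{thm2}.
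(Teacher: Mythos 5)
The paper never proves this lemma --- it is imported wholesale from \cite{BDM11a} --- so there is no in-paper argument to compare against; your proof is a correct, self-contained reconstruction, and it is worth recording how it relates to the source. The first inequality is fine: since $(\matI_m - \matC\pinv{\matC})\matC = \zeromtx$, one has $\matA - \matC\pinv{\matC}\matA = (\matI_m - \matC\pinv{\matC})(\matA - \matC\matX)$ for every $\matX$, and $\TNorm{\matI_m - \matC\pinv{\matC}} \le 1$ gives optimality of the orthogonal projection in both norms simultaneously (this one-line justification is worth writing out for $\xi=2$, where ``projection is optimal'' is not a column-by-column statement). For the second inequality, your candidate $\matX = \matA\matS(\matZ\transp\matS)^\dagger\matZ\transp$ and the cancellation $\matA - \matX = \matE(\matI_n - \matQ)$ with $\matQ = \matS(\matZ\transp\matS)^\dagger\matZ\transp$ are exactly the standard ones, and the hypothesis $\rank(\matZ\transp\matS)=k$ enters correctly through $(\matZ\transp\matS)(\matZ\transp\matS)^\dagger = \matI_k$. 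Where you genuinely diverge is the final estimate: the source exploits $\matE\matZ = \zeromtx$ (so $\matE$ and $\matE\matQ$ have orthogonal row spaces) to get the additive, Pythagorean-type bound $\XNormS{\matE(\matI_n-\matQ)} \le \XNormS{\matE} + \XNormS{\matE\matQ}$, whereas you invoke the oblique-projection identity $\TNorm{\matI_n - \matQ} = \TNorm{\matQ}$. Your route is what actually delivers the multiplicative form stated in the lemma: the additive bound combined with submultiplicativity only yields $\XNormS{\matE}\left(1 + \TNormS{\matS(\matZ\transp\matS)^\dagger}\right)$, which, because $\TNorm{\matS(\matZ\transp\matS)^\dagger} \ge 1$, matches the product form only up to a factor of $2$. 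The price is a less elementary ingredient (the norm identity for idempotents other than $\zeromtx$ and $\matI_n$ --- applicable here since $0<k<n$, the case $k=n$ forcing $\matE=\zeromtx$ and being trivial); your proposed Frobenius-only Pythagorean fallback is also sound but, as you note, loses a benign constant that would propagate into Theorems~\ref{thm1} and~\ref{thm2}.
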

We will also use  the following novel lower bound on the smallest singular value of the matrix $\matV_k$,
after deterministic selection of its rows based on the largest leverage scores. 
\begin{lemma}\label{lem1}
Repeat the conditions of Theorem~\ref{thm1}.
Then, 
$$\sigma_k^2(\matV_k\transp \matS) > 1-\varepsilon .$$
\end{lemma}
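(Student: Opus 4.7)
The plan is to reduce the singular value bound on the sampled submatrix to a trace bound on the leverage-score mass of the unsampled rows, which is exactly what the stopping criterion controls.

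First I would unpack the object $\matV_k\transp \matS$. Since $\matS$ is a column-selector matrix picking indices $\{1,\dots,c\}$ (the sorted top-$c$ leverage scores), the matrix $\matV_k\transp \matS \in \mathbb{R}^{k\times c}$ has columns equal to the transposed top-$c$ rows of $\matV_k$. Consequently,
\[
(\matV_k\transp \matS)(\matV_k\transp \matS)\transp \;=\; \sum_{i=1}^{c} [\matV_k]_{i,:}\transp [\matV_k]_{i,:},
\]
and $\sigma_k^2(\matV_k\transp \matS)$ is the smallest eigenvalue of this $k\times k$ PSD matrix.

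Next, I would exploit orthonormality of $\matV_k$, namely $\matV_k\transp \matV_k = \sum_{i=1}^{n}[\matV_k]_{i,:}\transp[\matV_k]_{i,:} = \matI_k$, to write
\[
(\matV_k\transp \matS)(\matV_k\transp \matS)\transp \;=\; \matI_k - \matM, \qquad \matM \;:=\; \sum_{i=c+1}^{n} [\matV_k]_{i,:}\transp [\matV_k]_{i,:}.
\]
The matrix $\matM$ is PSD, and Weyl's inequality gives
\[
\sigma_k^2(\matV_k\transp \matS) \;=\; \lambda_{\min}(\matI_k - \matM) \;\geq\; 1 - \lambda_{\max}(\matM).
\]

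The final step is to bound $\lambda_{\max}(\matM)$ using the stopping rule of Algorithm~\ref{alg1}. Since $\matM$ is PSD, $\lambda_{\max}(\matM) \leq \mathrm{Tr}(\matM)$, and
\[
\mathrm{Tr}(\matM) \;=\; \sum_{i=c+1}^{n} \bigl\|[\matV_k]_{i,:}\bigr\|_2^2 \;=\; \sum_{i=c+1}^{n} \ell_i^{(k)} \;=\; k - \sum_{i=1}^{c} \ell_i^{(k)} \;<\; k - \theta \;=\; \varepsilon,
\]
where the strict inequality is precisely the stopping criterion $\sum_{i=1}^{c}\ell_i^{(k)} > \theta = k-\varepsilon$. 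Combining yields $\sigma_k^2(\matV_k\transp \matS) > 1-\varepsilon$.

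There is no real obstacle here: all the work is already done by the choice of the stopping threshold. The only thing to be careful about is the trace-vs-spectrum step (using $\lambda_{\max} \le \mathrm{Tr}$ for a PSD matrix) and the edge case $c=k$, which is the branch the algorithm may be forced into; in that case the same identity $\matI_k - \matM$ still holds verbatim, so the argument goes through unchanged. This lemma is exactly the ingredient needed to apply Lemma~\ref{lem:structural} with $\matZ=\matV_k$, from which Theorem~\ref{thm1} follows by observing $\TNorm{\matS(\matV_k\transp \matS)^{\dagger}} \leq 1/\sigma_k(\matV_k\transp \matS) < (1-\varepsilon)^{-1/2}$ (since $\matS$ has orthonormal columns) and $\matE = \matA - \matA\matV_k\matV_k\transp$.
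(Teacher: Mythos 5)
Your proof is correct and follows essentially the same route as the paper: the paper also writes $\matI_k = \matV_k\transp\matS\matS\transp\matV_k + \matV_k\transp\hat\matS\hat\matS\transp\matV_k$ (your $\matI_k - \matM$ identity), applies a Weyl-type eigenvalue inequality to get $\lambda_k(\matV_k\transp\matS\matS\transp\matV_k) \ge 1 - \lambda_1(\matV_k\transp\hat\matS\hat\matS\transp\matV_k)$, and then bounds $\lambda_1 \le \FNormS{\matV_k\transp\hat\matS} = \sum_{i>c}\ell_i^{(k)} < k-\theta = \varepsilon$, which is exactly your $\lambda_{\max}(\matM)\le\Trace{\matM}$ step. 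No gaps.
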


\begin{proof}
We use the following perturbation result on the sum of eigenvalues of symmetric matrices.
\begin{lemma}\label{lem:weyl}[Theorem 2.8.1; part (i) in~\cite{brouwer2012spectra}]
Let $\matX$ and $\matY$ be symmetric matrices of order $k$ and, let $1 \le i,j \le n$
with $i+j \le k+1$.
Then,
$$
\lambda_i(\matX) \ge \lambda_{i+j-1}(\matX + \matY) - \lambda_j(\matY).
$$
\end{lemma}

Let $\matS \in \R^{n \times c}$ sample $c$ columns from $\matA$ with $c \ge k$.
Similarly, let $\hat\matS \in \R^{n \times (n-c)}$ sample the rest $n-c$ columns from $\matA$.
Hence,
$$
\matI_k = \matV_k\transp \matV_k = \matV_k\transp \matS \matS\transp \matV_k + \matV_k\transp \hat\matS \hat\matS\transp \matV_k.
$$
Let 
$$\matX = \matV_k\transp \matS \matS\transp \matV_k, \matY = \matV_k\transp \hat\matS \hat\matS\transp \matV_k, ~i=k, ~~\text{and}~~ j=1,$$ 
in Lemma~\ref{lem:weyl}. Notice that $i+j \le k+1,$
and $\lambda_{k}(\matX + \matY)=1$; hence:
\begin{align}
\lambda_k(\matV_k\transp \matS \matS\transp \matV_k) &\geq 1 - \lambda_1(\matV_k\transp \hat\matS \hat\matS\transp \matV_k ) \nonumber \\
												     &= 1 - \TNormS{\matV_k\transp \hat\matS} \nonumber \\
												     &\ge 1 -  \FNormS{\matV_k\transp \hat\matS} \nonumber \\
												     & > 1 - (k-\theta) \nonumber
\end{align}
Replacing $\theta = k - \varepsilon$ and
$\lambda_k(\matV_k\transp \matS \matS\transp \matV_k) = \sigma_k^2(\matV_k\transp\matS)$
concludes the proof.
\end{proof}

The proof of Theorem~\ref{thm1} is a straightforward combination of the Lemmas \ref{lem:structural} and \ref{lem1}.
First, by picking $\matZ = \matV_k$ in Lemma~\ref{lem:structural} we obtain:
\eqan{
\XNormS{\matA - \matC \pinv{\matC} \matA}
&\le& \XNormS{\matA-\matA_k} \cdot \TNormS{\matS (\matV_k\transp \matS)^\dagger} \\
&\le& \XNormS{\matA-\matA_k} \cdot \TNormS{\matS} \cdot \TNormS{(\matV_k\transp \matS)^\dagger}  \\
&=& \XNormS{\matA-\matA_k}                            \cdot \TNormS{ (\matV_k\transp \matS)^\dagger} \\
&=& \XNormS{\matA-\matA_k} / \sigma_k^2(\matV_k\transp \matS)
}
In the above, we used the facts that 
$$\matE = \matA - \matA \matV_k \matV_k\transp = \matA - \matA_k,$$ 
and the spectral norm of the sampling matrix $\matS$ equals one.
Also, we used that $\rank(\matV_k\transp\matS) = k$, which is implied from Lemma \ref{lem1}.
Next, via the bound in Lemma~\ref{lem1}:
$$ \XNormS{\matA - \matC \pinv{\matC} \matA}  < \XNormS{\matA-\matA_k} / (1-\varepsilon).$$

\subsection{Proof of Theorem~\ref{thm2}}
Let $\alpha_k = 1 + \eta$ for some $\eta > 0$. We assume that the leverage scores follow a power law decay such that:
$$\ell_i^{(k)} = \ell_1^{(k)} / i^{1 + \eta}.$$ 
According to the proposed algorithm, we select $c$ columns such that 
$$\sum_{i=1}^c \ell_i^{(k)} > \theta.$$ 
Here, we bound the number of columns $c$ required to achieve an $\varepsilon := k - \theta$ approximation in Theorem \ref{thm1}. To this end, we use the extreme case $\sum_{i=1}^c \ell_i^{(k)} = \theta$ which guarantees an $(1+\varepsilon)$-approximation. 

For our analysis, we use the following well-known result. 
\begin{proposition}\label{thm:integral}[Integral test for convergence]
Let $f(\cdot) \ge 0$ be  a function defined over the set of positive reals. Furthermore, assume that $f(\cdot)$ is monotone decreasing. Then, 
\begin{align}
\int_{j}^{J+1} f(i) dx ~\leq~ \sum_{i = j}^{J} f(i) ~\leq~ f(j) + \int_{j}^J f(x) dx, \nonumber
\end{align} over the interval $[j,\dots, J]$ for $j, J$ positive integers.
\end{proposition}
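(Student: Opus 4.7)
The plan is to reduce the proposition to a standard two-sided comparison between the value of $f$ at an integer endpoint and the integral over the adjacent unit interval, and then telescope. Concretely, for any positive integer $i$, the monotone decreasing hypothesis gives the pointwise inequalities
\begin{equation*}
f(i+1) \;\le\; f(x) \;\le\; f(i) \qquad \text{for all } x \in [i, i+1].
\end{equation*}
Integrating this double inequality over $[i,i+1]$ yields the fundamental building block
\begin{equation*}
f(i+1) \;\le\; \int_{i}^{i+1} f(x)\, dx \;\le\; f(i).
\end{equation*}
Everything else is index bookkeeping on top of this estimate.

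For the lower bound of the proposition, I would use the right half of the building block, namely $\int_{i}^{i+1} f(x)\, dx \le f(i)$, and sum it for $i = j, j+1, \ldots, J$. The integrals concatenate by additivity of the integral to yield $\int_{j}^{J+1} f(x)\, dx$, which is exactly the claimed lower bound. For the upper bound, I would instead use the left half $f(i+1) \le \int_{i}^{i+1} f(x)\, dx$, sum it for $i = j, j+1, \ldots, J-1$, and obtain $\sum_{i=j+1}^{J} f(i) \le \int_{j}^{J} f(x)\, dx$. Adding the isolated term $f(j)$ to both sides produces $\sum_{i=j}^{J} f(i) \le f(j) + \int_{j}^{J} f(x)\, dx$, which is the claimed upper bound.

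There is no real obstacle here: the only thing to be careful about is aligning the summation indices with the integration limits so that the two sides of the sandwich do not end up off by one (the lower bound naturally involves $J+1$ as the upper limit of integration because summing from $i=j$ to $i=J$ attaches one unit interval to the right of each integer index, while the upper bound only uses integration up to $J$ because we peel off the term $f(j)$ separately). Since $f \ge 0$, all the quantities involved are well-defined; no convergence issues arise over the finite range $[j, J+1]$, so the proof is complete once the two telescoping sums are written out.
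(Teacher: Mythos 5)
Your proof is correct and is the standard telescoping argument for the integral test; the paper itself states this proposition as a well-known result and offers no proof of its own, so there is nothing to diverge from. The index bookkeeping in both the lower bound (unit intervals attached to the right of each $i=j,\dots,J$, giving the upper limit $J+1$) and the upper bound (peeling off $f(j)$ and bounding the remaining terms by $\int_j^J f$) is handled correctly.
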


In our case, consider 
$$f(i) = \frac{1}{i^{1+\eta}}.$$ 
By definition of the leverage scores, we have:
$$k = \sum_{i=1}^n \ell_i^{(k)} = \ell_1^{(k)} \sum_{i=1}^n \frac{1}{i^{1+\eta}} \Longrightarrow \ell_1^{(k)} = \frac{k}{ \sum_{i=1}^n \frac{1}{i^{1+\eta}} }.$$
By construction, we collect $c$ leverage scores such that $k - \theta = \varepsilon$. This leads to: 
\begin{align}
k-\varepsilon &= \ell_1^{(k)}\cdot \sum_{i=1}^c  \frac{1}{i^{1+\eta}} =   \frac{k}{ \sum_{i=1}^n \frac{1}{i^{1+\eta}}} \cdot  \sum_{i=1}^c \frac{1}{i^{1+\eta}} \nonumber \\
&= k\left(\frac{\sum_{i=1}^n \frac{1}{i^{1+\eta}} - \sum_{i=c+1}^n \frac{1}{i^{1+\eta}}}{\sum_{i=1}^n \frac{1}{i^{1+\eta}}}\right) \nonumber \\
&= k\left(1- \frac{\sum_{i=c+1}^n \frac{1}{i^{1+\eta}}}{\sum_{i=1}^n \frac{1}{i^{1+\eta}}}\right) \Longrightarrow \nonumber \\
\varepsilon &= k \cdot \frac{\sum_{i=c+1}^n \frac{1}{i^{1+\eta}}}{\sum_{i=1}^n \frac{1}{i^{1+\eta}}}. \nonumber
\end{align}
To bound the quantity on the right hand side, we observe

\begin{align}
\frac{\sum_{i=c+1}^n \frac{1}{i^{1+\eta}}}{\sum_{i=1}^n \frac{1}{i^{1+\eta}}} &\le \frac{\frac{1}{(c+1)^{1+\eta}}+\int_{i=c+1}^n \frac{1}{x^{1+\eta}} dx }{\sum_{i=1}^n \frac{1}{i^{1+\eta}}} \nonumber \\
& = \frac{\frac{1}{(c+1)^{1+\eta}}+\left[-\frac{1}{x^{1+\eta}}\right]_{i=c+1}^n }{\sum_{i=1}^n \frac{1}{i^{1+\eta}}} \nonumber \\
& = \frac{\frac{1}{(c+1)^{1+\eta}}+\frac{1}{\eta}\left(\frac{1}{(c+1)^{\eta}}-\frac{1}{n^{\eta}}\right)}{\sum_{i=1}^n \frac{1}{i^{1+\eta}}} \nonumber \\
& \le \frac{\frac{1}{(c+1)^{1+\eta}}+\frac{1}{\eta(c+1)^{\eta}}}{\sum_{i=1}^n \frac{1}{i^{1+\eta}}} \nonumber \\
& = \frac{\frac{1}{(c+1)^{1+\eta}}+\frac{1}{\eta(c+1)^{\eta}}}{1+\sum_{i=2}^n \frac{1}{i^{1+\eta}}} \nonumber \\
&< \frac{1}{(c+1)\cdot (c+1)^{\eta}}+\frac{1}{\eta(c+1)^{\eta}} \nonumber \\
& \le \max\left\{ \frac{2}{ (c+1)^{1+\eta}},\;\;\;\frac{2}{\eta\cdot (c+1)^{\eta}} \right\}  \nonumber
\end{align} where the first inequality is due to the right hand side of the integral test and the third inequality is due to 
$$1+\sum_{i=2}^n \frac{1}{i^{1+\eta}} > 1.$$
Hence, we may conclude: 
$$
\varepsilon < k\cdot  \max\left\{ \frac{2}{ (c+1)^{1+\eta}},\;\;\;\frac{2}{\eta\cdot (c+1)^{\eta}} \right\}.
$$
The above lead to the following two cases: if 
$$\varepsilon < \frac{2k}{ (c+1)^{1+\eta}},$$ 
we have: 
$$c < \left(\frac{2\cdot k}{\varepsilon}\right)^{\frac{1}{1+\eta}}-1,$$ 
whereas in the case where 
$$\varepsilon < \frac{2\cdot k}{\eta\cdot (c+1)^{\eta}},$$ 
we get 
$$c < \left(\frac{2\cdot k}{\eta \cdot \varepsilon}\right)^{\frac{1}{\eta}}-1.$$ 

\section{The key role of $\theta$}\label{sec:theta}
In the proof of Theorem~\ref{thm1}, we require that 
$$ \sigma_k^2(\matV_k\transp \matS) > 1 - (k - \theta) := 1 - \varepsilon.$$ 
For this condition to hold, the sampling matrix $\matS$ should preserve the rank of $\matV_k\transp$ in $\matV_k\transp \matS,$
i.e., choose $\theta $ such that $\rank(\matV_k\transp \matS) = k$.

Failing to preserve the rank of $\matV_k\transp$ has immediate implications for the CSSP.
To highlight this, let $\matA \in \mathbb{R}^{m \times n}$ of rank $k < \min\{m,n\}$ with SVD $\matA = \matU_k \matSig_k \matV_k\transp$.
Further, assume that the $k$th singular value of $\matA$ is arbitrary large, i.e.,
$\sigma_k(\matA) \rightarrow \infty$. Also, let $\rank(\matV\transp \matS) = \gamma < k$ and $\matC = \matA \matS$.
Then,
\begin{align*}
&\|\matA - \matC \pinv{\matC} \matA\|_{\xi}  \nonumber \\ &=\XNorm{ \matU_k \matSig_k \matV_k\transp - \matU_k \matSig_k \matV_k\transp\matS (\matU_k \matSig_k \matV_k\transp\matS)^\dagger \matU_k \matSig_k \matV_k\transp} \\
&= \XNorm{  \matSig_k  - \matSig_k \matV_k\transp\matS (\matU_k \matSig_k \matV_k\transp\matS)^\dagger \matU_k \matSig_k }\\
&=  \XNorm{  \matSig_k  - \matSig_k \matV_k\transp\matS (\matSig_k\matV_k\transp\matS)^\dagger (\matU_k )^\dagger \matU_k \matSig_k }\\
&=  \XNorm{  \matSig_k  - \matSig_k \matV_k\transp\matS (\matSig_k\matV_k\transp\matS)^\dagger \matSig_k }\\
&=  \XNorm{  \matSig_k  - \matU_{\matX} \matU_{\matX}\transp \matSig_k }\\
& \ge \sigma_k(\matA)
\end{align*}
The second equality is due to the fact that both spectral and Frobenius norms are invariant to unitary transformations.
In the third equality, we used the fact that $(\matW \matZ )^\dagger = \matZ^\dagger \matW^\dagger$
if $\matW\transp\matW$ is the identity matrix. Then, set $\matX = \matSig_k \matV_k\transp\matS \in \R^{k \times c}$
where $\rank(\matX) = \gamma$. Using this notation, let $ \matU_{\matX} \in \R^{m \times \gamma}$ be any orthonormal basis for
$span(\matX)$. Observe $\matU_{\matX} \matU_{\matX}\transp = \matX \matX^\dagger.$
The last inequality is due to $\matU_{\matX} \matU_{\matX}\transp$ being an $m \times m$ diagonal
matrix with $\gamma$ ones along its main diagonal and the rest zeros.
Thus, we may conclude that for this $\matA$: 
$$\XNorm{\matA - \matC \pinv{\matC}\matA} \ge \sigma_k(\matA) \rightarrow \infty.$$

\section{Extensions to main algorithm}\label{sec:extensions}

Algorithm~\ref{alg1} requires $O( \min\{m,n\} m n)$ arithmetic operations since, in the first step of the algorithm,
it computes the top k right singular vectors of $\matA$ through the SVD. In this section,
we describe how to improve the running time complexity of the algorithm while maintaining about
the same approximation guarantees. The main idea is to replace
the top $k$ right singular vectors of $\matA$ with some orthonormal vectors that ``approximate''
the top $k$ right singular vectors in a sense that we make precise below.  
Boutsidis et al. introduced this idea to improve the running time complexity of column subset selection algorithms
in~\cite{BDM11a}. 

\subsection{Frobenius norm}
We start with a result which is a slight extension of a result appeared in~\cite{GP13}. 
Lemma~\ref{lem:approxSVDdet} below appeared in~\cite{BW13} but we include the proof
for completeness. For the description of the algorithm we refer to~\cite{Lib13,GP13}.
\begin{lemma}[Theorem 3.1 in~\cite{GP13}]\label{lem:approxSVDdet}
Given \math{\matA\in\R^{m\times n}} of rank $\rho$, a target rank $1 \leq k < \rho$, and $0 < \varepsilon \le 1$, there exists a deterministic algorithm that  computes
$\matZ \in \R^{n \times k}$ with $\matZ\transp\matZ = \matI_k$  and
$$\FNormS{\matA - \matA \matZ \matZ\transp} \leq \left(1+{\varepsilon}\right)\FNormS{\matA - \matA_k}.$$
The proposed algorithm runs in $O\left(m n k^2 \varepsilon^{-2}\right)$ time. 
\end{lemma}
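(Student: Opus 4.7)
The plan is to invoke the Frequent Directions (FD) sketching algorithm of Liberty~\cite{Lib13}, in the refined analysis of Ghashami and Phillips~\cite{GP13}. Applying FD to the rows of $\matA$ with sketch size $\ell = k + \lceil k/\varepsilon \rceil$ produces a deterministic sketch matrix $\matB \in \R^{\ell \times n}$ with the two invariants one maintains through the FD inner loop (repeated thin SVDs and shrinkage of the bottom singular values): (i) $\matA\transp\matA - \matB\transp\matB$ is positive semidefinite; and (ii) $\TNorm{\matA\transp\matA - \matB\transp\matB} \le \FNormS{\matA - \matA_k}/(\ell - k)$. These two properties are precisely what the FD amortization argument delivers, and I would cite~\cite{Lib13,GP13} for their proofs rather than rederive them.

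Given $\matB$, I would take $\matZ \in \R^{n \times k}$ to be the matrix whose columns are the top-$k$ right singular vectors of $\matB$; then $\matZ\transp\matZ = \matI_k$ by construction. The error decomposes as $\FNormS{\matA - \matA\matZ\matZ\transp} = \FNormS{\matA} - \FNormS{\matA\matZ}$, so the task reduces to lower-bounding $\FNormS{\matA\matZ}$. By invariant (i), $\FNormS{\matA\matZ} \ge \FNormS{\matB\matZ}$, and the latter equals $\sum_{i=1}^k \sigma_i^2(\matB)$ because $\matZ$ spans the top-$k$ right singular space of $\matB$.

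Next I would invoke Weyl's inequality on the symmetric matrices $\matA\transp\matA$ and $\matB\transp\matB$, together with property (ii), to get $\sigma_i^2(\matA) - \sigma_i^2(\matB) \le \FNormS{\matA - \matA_k}/(\ell - k)$ for every $i \in \{1,\ldots,\rho\}$. Summing for $i = 1,\ldots,k$ yields $\sum_{i=1}^k \sigma_i^2(\matB) \ge \FNormS{\matA_k} - k \cdot \FNormS{\matA - \matA_k}/(\ell - k)$. Chaining the bounds,
\begin{align*}
\FNormS{\matA - \matA\matZ\matZ\transp}
&\le \FNormS{\matA} - \FNormS{\matA_k} + \frac{k}{\ell - k}\,\FNormS{\matA - \matA_k} \\
&= \left(1 + \frac{k}{\ell - k}\right) \FNormS{\matA - \matA_k},
\end{align*}
and the choice $\ell \ge k + k/\varepsilon$ makes the multiplicative factor at most $1 + \varepsilon$, as required.

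The main obstacle is not the algebra above but carefully stating and using the FD invariants (i)--(ii); these are the nontrivial streaming guarantees whose proofs are an amortization argument tracking the total Frobenius mass removed during shrinkage. I would quote~\cite{Lib13,GP13} for this part and focus the detailed work on the three-line chain of inequalities above. The running-time claim is then obtained by bounding the cost of the $O(m/\ell)$ thin SVDs that FD performs on the sketch of size $\ell \times n$; each such SVD costs $O(n\ell^2)$, and with $\ell = \Theta(k/\varepsilon)$ this totals $O(mn k /\varepsilon) \cdot \ell = O(mn k^2/\varepsilon^2)$, matching the stated bound.
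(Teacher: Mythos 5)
Your proof is correct, but it takes a genuinely different route from the paper's. The paper's proof is a short reduction: it cites Theorem 4.1 of~\cite{GP13} as a black box for a (possibly non-orthonormal) $\matQ_k \in \R^{k \times n}$ with $\FNormS{\matA - \matA\matQ_k\transp\matQ_k} \le (1+\varepsilon)\FNormS{\matA-\matA_k}$, then orthonormalizes $\matQ_k\transp$ via a QR factorization $\matQ_k\transp = \matZ\matR$ and observes that replacing $\matQ_k\transp\matQ_k$ by the orthogonal projector $\matZ\matZ\transp$ onto the same subspace can only decrease the error. You instead open the black box: you state the two Frequent Directions invariants ($\matA\transp\matA - \matB\transp\matB \succeq 0$ and $\TNorm{\matA\transp\matA - \matB\transp\matB} \le \FNormS{\matA-\matA_k}/(\ell-k)$), and rederive the projection bound from them via the identity $\FNormS{\matA - \matA\matZ\matZ\transp} = \FNormS{\matA} - \FNormS{\matA\matZ}$, monotonicity from the PSD invariant, and Weyl's inequality; this chain is sound, and taking $\matZ$ to be the top-$k$ right singular vectors of the sketch $\matB$ makes $\matZ$ orthonormal by construction, so no QR post-processing is needed. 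What each buys: the paper's version is two lines and requires nothing about FD internals; yours is self-contained modulo the two invariants and makes explicit why the $1/\varepsilon$ dependence in the sketch size appears. One small arithmetic slip in your running-time accounting: $O(m/\ell)$ thin SVDs at $O(n\ell^2)$ each totals $O(mn\ell) = O(mnk/\varepsilon)$, not $O(mn\ell^2)$; since $O(mnk/\varepsilon) \subseteq O(mnk^2\varepsilon^{-2})$, this only means your algorithm is faster than the lemma claims, so the stated bound still holds.
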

\begin{proof}
Theorem 4.1 in~\cite{GP13} describes an algorithm that given $\matA, k$ and $\varepsilon$ constructs $\matQ_k \in \R^{k \times n}$
such that $$\FNormS{\matA - \matA\matQ_k\transp\matQ_k} \leq \left(1+{\varepsilon}\right)\FNormS{\matA - \matA_k}.$$ To obtain the desired
factorization, we just need an additional step to ortho-normalize the columns of $\matQ_k\transp,$ which takes $O(nk^2)$ time. So, assume
that $\matQ_k\transp = \matZ \matR$ is a QR factorization of $\matQ_k\transp$ with $\matZ \in \R^{n \times k}$ and $\matR \in \R^{k \times k}$.
Then,
\begin{align*}
\FNormS{\matA - \matA \matZ \matZ\transp} &\le \FNormS{\matA - \matA \matQ_k\transp  (\matR\transp) \matZ} \nonumber \\
&= \FNormS{\matA - \matA \matQ_k\transp \matR\transp (\matR\transp)^{-1} \matQ_k} \nonumber  \\
&= \FNormS{\matA - \matA \matQ_k\transp  \matQ_k} \nonumber \\
&\le \left(1+{\varepsilon}\right)\FNormS{\matA - \matA_k}.
\end{align*}
\end{proof}
In words, the lemma describes a method that constructs a rank $k$ matrix $\matA \matZ \matZ\transp$ that
is as good as the rank $k$ matrix $\matA_k = \matA \matV_k \matV_k\transp$ from the SVD of $\matA$.
Hence, in that ``low-rank matrix approximation sense'' $\matZ$ can replace $\matV_k$ in our column subset selection algorithm.

Now, consider an algorithm as in Algorithm~\ref{alg1} where in the first step,
instead of $\matV_k,$ we compute
$\matZ$ as it was described in Lemma~\ref{lem:approxSVDdet}. This modified algorithm requires
$O\left(m n k^2 \varepsilon^{-2}\right)$ arithmetic operations.  For this deterministic 
algorithm we have the following theorem. 
\begin{theorem}\label{thm1_ext1}
Let 
$\theta = k- \varepsilon,$ for some $\varepsilon \in (0,1)$, and 
let $\matS$ be the $n \times c$ output  sampling matrix of the modified Algorithm~\ref{alg1} described above.
Then, for $\matC=\matA \matS$ we have 
$$ \FNormS{\matA - \matC \pinv{\matC}\matA} <  \left( 1 + \varepsilon \right) \cdot \left(1-\varepsilon\right)^{-1} \cdot \FNormS{\matA - \matA_k}. $$
\end{theorem}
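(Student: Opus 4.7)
The plan is to mimic the proof of Theorem~\ref{thm1} almost verbatim, substituting $\matZ$ for $\matV_k$ throughout, and then paying the price of the approximate-SVD error at the very end.

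First, I would invoke Lemma~\ref{lem:structural} with the factorization $\matA = \matA\matZ\matZ\transp + \matE$, where $\matE := \matA - \matA\matZ\matZ\transp$ and $\matZ$ is the approximate right-singular-vector surrogate from Lemma~\ref{lem:approxSVDdet}. Since $\matZ\transp\matZ = \matI_k$, the structural lemma applies provided $\rank(\matZ\transp\matS) = k$, yielding
\[
\FNormS{\matA - \matC\pinv{\matC}\matA} \le \FNormS{\matE}\cdot \TNormS{\matS(\matZ\transp\matS)^\dagger} \le \FNormS{\matE}\,/\,\sigma_k^2(\matZ\transp\matS),
\]
using that the sampling matrix has spectral norm one.

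Second, I would reprove Lemma~\ref{lem1} with $\matZ$ in place of $\matV_k$. The argument is identical: write $\matI_k = \matZ\transp\matS\matS\transp\matZ + \matZ\transp\hat\matS\hat\matS\transp\matZ$, apply Lemma~\ref{lem:weyl} with $i=k$, $j=1$, and bound $\TNormS{\matZ\transp\hat\matS} \le \FNormS{\matZ\transp\hat\matS} = \sum_{i>c}\ell_i^{(k)}(\matZ)$, where $\ell_i^{(k)}(\matZ)$ now denotes the squared row norms of $\matZ$. Since the algorithm's stopping rule uses exactly these scores (computed from $\matZ$) and enforces $\sum_{i=1}^c \ell_i^{(k)}(\matZ) > \theta = k-\varepsilon$, while $\sum_i \ell_i^{(k)}(\matZ) = \FNormS{\matZ} = k$, we conclude $\sigma_k^2(\matZ\transp\matS) > 1-\varepsilon$. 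This also confirms the rank-$k$ hypothesis needed in the first step.

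Third, I would combine the two inequalities with the approximate-SVD guarantee from Lemma~\ref{lem:approxSVDdet}, namely $\FNormS{\matE} \le (1+\varepsilon)\FNormS{\matA - \matA_k}$, to obtain
\[
\FNormS{\matA - \matC\pinv{\matC}\matA} < \frac{(1+\varepsilon)}{(1-\varepsilon)}\,\FNormS{\matA - \matA_k},
\]
which is exactly the claimed bound.

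I do not foresee a genuine obstacle: the only place where one could slip is conflating two distinct $\varepsilon$'s (the one used as the tolerance in Lemma~\ref{lem:approxSVDdet} and the one in the stopping threshold $\theta = k-\varepsilon$). The cleanest route is to choose them equal, as in the statement of the theorem; if one wished, the argument would give a two-parameter bound $(1+\varepsilon_1)/(1-\varepsilon_2)$. A minor point to note is that the leverage scores must be computed from $\matZ$, not from the true $\matV_k$, so the sampling matrix $\matS$ is adapted to $\matZ$; this is precisely what the modified Step 1 of Algorithm~\ref{alg1} ensures, so the extension of Lemma~\ref{lem1} goes through without additional work.
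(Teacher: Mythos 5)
Your proposal is correct and follows essentially the same route as the paper: apply Lemma~\ref{lem:structural} with $\matZ$ from Lemma~\ref{lem:approxSVDdet} in place of $\matV_k$, observe that Lemma~\ref{lem1} holds verbatim for any orthonormal $\matZ$ whose squared row norms drive the stopping rule (the paper makes this same point in a footnote), and multiply the two error factors. Your remark about keeping the two $\varepsilon$'s distinct is a valid refinement but not needed for the stated bound.
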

\begin{proof}
Let $\matZ$ be constructed as in Lemma~\ref{lem:approxSVDdet}. 
Using this $\matZ$ and $\xi = \mathrm{F}$ in Lemma~\ref{lem:structural} we obtain:
\eqan{
\FNormS{\matA - \matC \pinv{\matC} \matA}
&\le& \FNormS{\matA-\matA \matZ \matZ\transp} \cdot \TNormS{\matS (\matZ\transp \matS)^\dagger} \\
&\le& \FNormS{\matA-\matA \matZ \matZ\transp} \cdot \TNormS{\matS} \cdot \TNormS{(\matZ\transp \matS)^\dagger}  \\
&=& \FNormS{\matA-\matA \matZ \matZ\transp}                            \cdot \TNormS{ (\matZ\transp \matS)^\dagger} \\
&=& \FNormS{\matA-\matA \matZ \matZ\transp} / \sigma_k^2(\matZ\transp \matS)
}
In the above, we used the facts that $\matE = \matA -\matA \matZ \matZ\transp$ and the spectral norm of the sampling matrix $\matS$ equals one.
Also, we used that $\rank(\matZ \transp\matS) = k$, which is implied from Lemma \ref{lem1}.
Next, via the bound in Lemma~\ref{lem1} on $\matZ$~\footnote{
It is easy to see that Lemma~\ref{lem1} holds for any orthonormal matrix $\matV_k$
and it is not neccesary that $\matV_k$ contains the singular vectors of  matrix $\matA$.
}:
$$ \TNormS{\matA - \matC \pinv{\matC} \matA}  < \TNormS{\matA-\matA \matZ \matZ\transp} / (1-\varepsilon).$$
Finally, using $\FNormS{\matA - \matA \matZ \matZ\transp} \leq \left(1+{\varepsilon}\right)\FNormS{\matA - \matA_k}$
according to  Lemma~\ref{lem:approxSVDdet} concludes the proof. 
\end{proof}

\subsection{Spectral norm}
To achieve a similar running time improvement for the spectral norm bound of Theorem~\ref{thm1},
we need an analogous result as in Lemma~\ref{lem:approxSVDdet}, but for the spectral norm. We
are not aware of any such deterministic algorithm. Hence, we quote Lemma 11 from~\cite{BDM11a},
which provides a randomized algorithm. 
\begin{lemma}[Randomized fast spectral norm SVD]
\label{tropp1}
Given \math{\matA\in\R^{m\times n}} of rank $\rho$, a target rank $2\leq k < \rho$, and
$0 < \varepsilon < 1$,
there exists an algorithm that
computes a factorization  $\matA = \matB \matZ\transp + \matE$, with $\matB = \matA \matZ$, $\matZ\transp\matZ = \matI_k$
such that
$$
\Expect{\TNorm{\matE}} \leq \left(\sqrt{2}+\varepsilon\right) \TNorm{\matA - \matA_k}.
$$
This algorithm runs in
$O\left(mnk\varepsilon^{-1}
\log\left( k^{-1}\min\{m,n\}\right)\right)$ time.
\end{lemma}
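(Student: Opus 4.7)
\medskip

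\noindent\textbf{Proof proposal for Lemma \ref{tropp1}.}
Since this result is cited verbatim from~\cite{BDM11a}, the plan is to reconstruct the proof along the lines of the randomized subspace-iteration method of Halko--Martinsson--Tropp rather than invent something new. The high-level idea is to build $\matZ$ as an orthonormal basis for the range of $(\matA\matA\transp)^q \matA \matOmega$, where $\matOmega \in \R^{n \times \ell}$ is a standard Gaussian test matrix, $\ell = k + p$ is a mildly oversampled target dimension, and $q$ is a small number of power iterations. One then takes $\matB = \matA \matZ$ and $\matE = \matA - \matB \matZ\transp = \matA - \matA \matZ \matZ\transp$. The job is to show that one can pick $p$ and $q$ so that the spectral-norm residual meets the $(\sqrt{2} + \varepsilon)$ bound, and that the total arithmetic cost matches the stated running time.

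The first step is the structural bound: for any orthonormal $\matZ \in \R^{n\times k}$ whose range is the output of the randomized range finder, the deviation $\TNorm{\matA - \matA \matZ \matZ\transp}$ is controlled by the spectrum of $\matA$ together with the conditioning of $\matV_k\transp \matOmega$ and the tail projection $\matV_{\rho-k}\transp \matOmega$, using the standard decomposition of $\matA$ in the basis furnished by $\matV_\matA$. Plugging in the $q$-th subspace iteration amplifies the gap: each singular value $\sigma_i$ of $\matA$ appears raised to the power $2q+1$ inside the range, so that the error depends on $\sigma_{k+1}(\matA)^{2q+1}$ after $q$ iterations. Balancing the oversampling $p = O(k)$ against $q = O(\varepsilon^{-1} \log(\min\{m,n\}/k))$ gives expected spectral error $(\sqrt{2} + \varepsilon) \sigma_{k+1}(\matA) = (\sqrt{2} + \varepsilon) \TNorm{\matA - \matA_k}$, via the expected-norm bounds for Gaussian matrices (in particular, the $\Expect{\TNorm{\matG}}$ and $\Expect{\TNorm{\matG^\dagger}}$ estimates in the HMT framework).

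The second step is the running-time accounting. Forming $\matA \matOmega$ costs $O(mn\ell)$. Each power iteration amounts to two dense matrix--matrix products against $\matA$ and $\matA\transp$, costing $O(mn\ell)$ each, so $q$ iterations bring the total to $O(mn\ell q) = O(mnk\varepsilon^{-1}\log(\min\{m,n\}/k))$ after absorbing the oversampling into the $O(k)$ factor. The final QR of the $m \times \ell$ sketch costs $O(m\ell^2)$, which is dominated by the preceding term for the regime $k \le \min\{m,n\}$.

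The main obstacle I would anticipate is pinning down the precise constant $\sqrt{2}$ rather than a generic $O(1)$ factor in the spectral norm. This requires the sharper version of the HMT deviation bound together with a careful choice of how the power-iteration exponent trades against the oversampling; in particular, one needs the averaged version of the inequality $\Expect{\TNorm{(\matA - \matA \matZ \matZ\transp)}} \le \bigl(1 + f(p,k)\bigr)^{1/(2q+1)} \sigma_{k+1}(\matA)$ and must pick $q$ large enough that the factor inside the parentheses is driven below $(\sqrt{2}+\varepsilon)^{2q+1}$. Once this is done, the factorization $\matA = \matB\matZ\transp + \matE$ and the running-time claim follow immediately from the construction.
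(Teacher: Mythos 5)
The paper does not actually prove this lemma: it is quoted verbatim as Lemma 11 of~\cite{BDM11a}, so there is no in-paper argument to compare against. Your plan to reconstruct it via Halko--Martinsson--Tropp subspace iteration is the right framework (it is essentially what~\cite{BDM11a} does), and your running-time accounting is correct. However, there is a genuine gap in how you propose to obtain the constant $\sqrt{2}$, and it is tied to a requirement of the statement that your construction as described does not meet.

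The lemma demands $\matZ\transp\matZ = \matI_k$, i.e.\ a factor with \emph{exactly} $k$ orthonormal columns. The HMT expectation bounds you invoke (the estimates for $\Expect{\TNorm{\matG}}$ and $\Expect{\TNorm{\pinv{\matG}}}$) require oversampling $p \ge 2$ --- for a square $k \times k$ Gaussian the expected pseudo-inverse norm is not finite --- so the randomized range finder necessarily returns an orthonormal basis with $\ell = k + p > k$ columns. Moreover, the factor $\bigl(1 + f(p,k)\bigr)^{1/(2q+1)}$ that you plan to control tends to $1$, not to $\sqrt{2}$, as $q$ grows; if driving it down were the whole story you would end up with a $(1+\varepsilon)$ spectral bound for a rank-$\ell$ projection, not the claimed rank-$k$ factorization. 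The missing step is a rank-$k$ truncation inside the $\ell$-dimensional sketched subspace, and that truncation is precisely where the $\sqrt{2}$ enters: restricting to the best rank-$k$ approximation within the span of the sketch costs a factor of $2$ in the squared spectral norm --- the same phenomenon recorded in Lemma~\ref{lem:bestF} of this paper, $\TNormS{\matA - \matQ(\matQ\transp\matA)_k} \le 2\,\TNormS{\matA - \Pi_{\matC,k}^{2}(\matA)}$. So the correct architecture is: power iteration with $O(k)$ oversampling drives the HMT factor to $1 + O(\varepsilon)$, and the truncation to exact rank $k$ supplies the $\sqrt{2}$. A smaller point: as written, your sketch $(\matA\matA\transp)^q\matA\matOmega$ with $\matOmega \in \R^{n \times \ell}$ is an $m \times \ell$ matrix, so its orthonormal basis lives on the wrong side; since the target is $\matA \approx \matA\matZ\matZ\transp$ with $\matZ \in \R^{n \times k}$, the range finder must be applied to $\matA\transp$ (i.e.\ use $(\matA\transp\matA)^q\matA\transp\matOmega$ with $\matOmega \in \R^{m \times \ell}$).
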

In words, the lemma describes a method that constructs a rank $k$ matrix $\matA \matZ \matZ\transp$ that
is as good as the rank $k$ matrix $\matA_k = \matA \matV_k \matV_k\transp$ from the SVD of $\matA$.
Hence, in that ``low-rank matrix approximation sense'' $\matZ$ can replace $\matV_k$. The difference between
Lemma~\ref{tropp1} and Lemma~\ref{lem:approxSVDdet} is that the matrix $\matA \matZ \matZ\transp$
approximates $\matA_k$ with respect to a different norm.

Now consider an algorithm as in Algorithm~\ref{alg1} where in the first step we compute
$\matZ$ as it was described in Lemma~\ref{tropp1}. This algorithm takes 
$O\left(mnk\varepsilon^{-1}
\log\left( k^{-1}\min\{m,n\}\right)\right)$ time.  For this randomized
algorithm we have the following theorem. 
\begin{theorem}\label{thm1_ext2}
Let 
$\theta = k- \varepsilon,$ for some $\varepsilon \in (0,1)$, and 
let $\matS$ be the $n \times c$ output  sampling matrix of the modified Algorithm~\ref{alg1} described above.
Then, for $\matC=\matA \matS$ we have 
$$ \Expect{\TNorm{\matA - \matC \pinv{\matC}\matA}} <  \left( \sqrt{2} + \varepsilon \right) \cdot \sqrt{ \left(1-\varepsilon\right)^{-1}} \cdot \TNorm{\matA - \matA_k}. $$
\end{theorem}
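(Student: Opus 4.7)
The plan is to mirror the argument of Theorem~\ref{thm1_ext1}, but for the spectral norm, with the added twist that the approximate SVD from Lemma~\ref{tropp1} is randomized. The key observation that makes this work cleanly is that the deterministic bound on $\sigma_k(\matZ\transp\matS)$ from Lemma~\ref{lem1} carries over verbatim when $\matZ$ is any $n\times k$ matrix with orthonormal columns, not just the true top-$k$ right singular vectors of $\matA$.

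First, I would invoke Lemma~\ref{lem:structural} with $\xi=2$, substituting the approximate factor $\matZ$ from Lemma~\ref{tropp1} in place of $\matV_k$. Since $\matA\matZ\matZ\transp$ is precisely the rank-$k$ factorization appearing in the lemma, and since $\matS$ satisfies $\rank(\matZ\transp\matS)=k$ (which follows from the next step), we obtain
$$\TNormS{\matA - \matC\pinv{\matC}\matA} \;\le\; \TNormS{\matA - \matA\matZ\matZ\transp} \cdot \TNormS{\matS(\matZ\transp\matS)^{\dagger}}.$$
Using $\TNorm{\matS}=1$, the second factor simplifies to $1/\sigma_k^2(\matZ\transp\matS)$.

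Next, I would apply Lemma~\ref{lem1} to the orthonormal matrix $\matZ$. The proof of Lemma~\ref{lem1} nowhere uses that $\matV_k$ consists of singular vectors of $\matA$; it uses only that the columns are orthonormal and that $\matS$ is the output of Algorithm~\ref{alg1} applied to the leverage scores of this orthonormal matrix. Hence the same bound $\sigma_k^2(\matZ\transp\matS) > 1-\varepsilon$ holds deterministically (conditional on $\matZ$). Taking a square root and combining with the previous display gives the pointwise inequality
$$\TNorm{\matA - \matC\pinv{\matC}\matA} \;<\; \frac{1}{\sqrt{1-\varepsilon}} \cdot \TNorm{\matA - \matA\matZ\matZ\transp}.$$
Because the constant $1/\sqrt{1-\varepsilon}$ is deterministic (it depends only on the stopping parameter $\theta$), I can pull it outside the expectation and then apply the randomized guarantee from Lemma~\ref{tropp1}, yielding
$$\Expect{\TNorm{\matA - \matC\pinv{\matC}\matA}} \;<\; \frac{1}{\sqrt{1-\varepsilon}} \cdot \Expect{\TNorm{\matA - \matA\matZ\matZ\transp}} \;\le\; (\sqrt{2}+\varepsilon)\cdot \sqrt{(1-\varepsilon)^{-1}} \cdot \TNorm{\matA - \matA_k}.$$

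The main subtlety, which I would highlight explicitly, is that even though $\matZ$ is random and the sampling matrix $\matS$ depends on $\matZ$, the factor $\TNorm{\matS(\matZ\transp\matS)^{\dagger}}$ is bounded deterministically for \emph{every} realization of $\matZ$ with orthonormal columns. This is what permits the clean decoupling of the two sources of error — the approximate-SVD residual (handled in expectation) and the singular-value lower bound on $\matZ\transp\matS$ (handled pointwise) — and avoids any need for a joint probabilistic analysis of the two randomness sources.
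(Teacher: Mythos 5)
Your proposal is correct and follows essentially the same route as the paper: Lemma~\ref{lem:structural} with $\xi=2$ and the approximate factor $\matZ$, the observation (made in a footnote in the paper) that Lemma~\ref{lem1} applies to any orthonormal $\matZ$, a pointwise bound followed by a square root, and then an expectation over the randomness of $\matZ$ combined with Lemma~\ref{tropp1}. Your explicit remark that the singular-value bound holds for every realization of $\matZ$, which justifies pulling the deterministic constant outside the expectation, is a point the paper leaves implicit but is exactly the right thing to flag.
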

\begin{proof}
Let $\matZ$ be constructed as in Lemma~\ref{tropp1}. 
Using this $\matZ$ and $\xi = 2$ in Lemma~\ref{lem:structural} we obtain:
\eqan{
\TNormS{\matA - \matC \pinv{\matC} \matA}
&\le& \TNormS{\matA-\matA \matZ \matZ\transp} \cdot \TNormS{\matS (\matZ\transp \matS)^\dagger} \\
&\le& \TNormS{\matA-\matA \matZ \matZ\transp} \cdot \TNormS{\matS} \cdot \TNormS{(\matZ\transp \matS)^\dagger}  \\
&=& \TNormS{\matA-\matA \matZ \matZ\transp}                            \cdot \TNormS{ (\matZ\transp \matS)^\dagger} \\
&=& \TNormS{\matA-\matA \matZ \matZ\transp} / \sigma_k^2(\matZ\transp \matS)
}
In the above, we used the facts that $\matE = \matA -\matA \matZ \matZ\transp$ and the spectral norm of the sampling matrix $\matS$ equals one.
Also, we used that $\rank(\matZ \transp\matS) = k$, which is implied from Lemma \ref{lem1}.
Next, via the bound in Lemma~\ref{lem1} on $\matZ$:
$$ \TNormS{\matA - \matC \pinv{\matC} \matA}  < \TNormS{\matA-\matA \matZ \matZ\transp} / (1-\varepsilon).$$
Taking square root on both sides of this relation we obtain:
$$ \TNorm{\matA - \matC \pinv{\matC} \matA}  < \TNorm{\matA-\matA \matZ \matZ\transp} \sqrt{ (1-\varepsilon)^{-1} }.$$
Taking expectations with respect to the randomness of $\matZ$ yields,
$$ \Expect{\TNorm{\matA - \matC \pinv{\matC} \matA}}  < \Expect{\TNorm{\matA-\matA \matZ \matZ\transp}} \sqrt{ (1-\varepsilon)^{-1} }.$$
Finally, using $\Expect{\TNorm{\matA-\matA \matZ \matZ\transp}} \le (\sqrt{2} + \varepsilon) \TNorm{\matA - \matA_k}$ - from Lemma~\ref{tropp1} -
concludes the proof. 
\end{proof}

We also mention that it is now straightforward to prove an analog of Theorem~\ref{thm2}
for the algorithms we analyze in Theorems~\ref{thm1_ext1} and~\ref{thm1_ext2}. One should
replace the assumption of the power law decay of the leverage scores with an assumption
of the power law decay of the row norms square of the matrix $\matZ$. Whether the row norms
of those matrices $\matZ$ follow a power law decay is an interesting open question which will
be worthy to investigate in more detail. 

\subsection{Approximations of rank $k$}
Theorems~\ref{thm1},~\ref{thm2},~\ref{thm1_ext1}, and~\ref{thm1_ext2} provide bounds for low rank approximations
of the form $\matC \matC^\dagger \matA \in \R^{m \times n},$ where $\matC$ contains $c \ge k$ columns of $\matA$.  The matrix 
$\matC \matC^\dagger \matA$ could potentially have rank larger than $k$, indeed it can be as large as $c$. 
In this section, we describe how to construct factorizations that have rank $k$ and are as accurate as those in Theorems~\ref{thm1},~\ref{thm2},~\ref{thm1_ext1}, and~\ref{thm1_ext2}. Constructing a rank $k$ instead of a rank $c$ column-based low-rank matrix factorization is a harder problem
and might be desirable in certain applications (see, for example, Section 4 in~\cite{DRVW06} where the authors apply rank $k$ column-based low-rank matrix factorizations to solve the projective clustering problem). 

Let $\matA \in \mathbb{R}^{m \times n}$, let $k < n$ be an integer, and let $\matC \in \mathbb{R}^{m \times c}$ with $c \ge k$. Let $\Pi_{\matC,k}^\xi(\matA) \in \mathbb{R}^{m \times n}$ be the best rank \math{k} approximation to \math{\matA} in the column space of \math{\matC} with respect to the $\xi$ norm. 
Hence, we can write $\Pi_{\matC,k}^\xi(\matA) = \matC\matX^\xi$, where
$$
\matX^\xi = \argmin_{\matPsi \in {\R}^{c \times n}:\rank(\matPsi)\leq k}\XNormS{\matA-
\matC\matPsi}.
$$
In order to compute (or approximate) $\Pi_{\matC,k}^{\xi}(\matA)$ given $\matA$,
$\matC$, and $k$, we will use the following algorithm:
\begin{center}
\begin{algorithmic}[1]
\STATE Ortho-normalize the columns of $\matC \in \R^{m \times c}$ in $O(m c^2)$ time to construct the matrix $\matQ \in \R^{m \times c}$.
\STATE Compute
 $(\matQ\transp \matA)_k \in \R^{c \times n}$ via the SVD
in \math{O(mnc+ nc^2)}; $(\matQ\transp \matA)_k$ has rank $k$ and denotes the best rank-$k$ approximation of
\math{\matQ\transp\matA}.
\STATE Return 
$\matQ(\matQ\transp \matA)_k \in \mathbb{R}^{m \times n}$ in $O(mnk)$ time.
\end{algorithmic}
\end{center}
\medskip
Clearly, $\matQ(\matQ\transp \matA)_k$ is a rank $k$ matrix that lies in the column span of $\matC$. Note that though  $\Pi_{\matC,k}^{\xi}(\matA)$ can depend on \math{\xi}, our algorithm computes the same matrix, independent of \math{\xi}. The next lemma was proved in~\cite{BDM11a}. 
\begin{lemma}\label{lem:bestF}
Given $\matA \in {\R}^{m \times n}$, $\matC\in\R^{m\times c}$ and an integer $k$,  the matrix $\matQ(\matQ\transp \matA)_k \in \mathbb{R}^{m \times n}$ described above (where \math{\matQ} is an orthonormal basis for the columns of \math{\matC}) can be computed in $O\left(mnc + (m+n)c^2\right)$ time and satisfies:
\begin{eqnarray*}
\FNormS{\matA-\matQ(\matQ\transp \matA)_k} &=& \FNormS{\matA-\Pi_{\matC,k}^{\mathrm{F}}(\matA)},\\[3pt]
\TNormS{\matA-\matQ(\matQ\transp \matA)_k} &\leq& 2\TNormS{\matA-\Pi_{\matC,k}^{2}(\matA)}.
\end{eqnarray*}
\end{lemma}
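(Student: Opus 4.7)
The plan is to prove the Frobenius bound exactly and the spectral bound via a column-orthogonal decomposition that keeps the leading constant at $2$ rather than the $4$ one would get from a naive triangle inequality.

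For the Frobenius part, I would first observe that any matrix of rank at most $k$ in the column span of $\matC$ can be written as $\matQ\matY$ with $\matY \in \R^{c\times n}$ of rank at most $k$, since $\matQ$ is an orthonormal basis for $\col(\matC)$. Using that $\matQ\matQ\transp$ is an orthogonal projector and that $(\matI - \matQ\matQ\transp)\matQ = \bm{0}$, I would split
$$ \FNormS{\matA - \matQ\matY} = \FNormS{(\matI-\matQ\matQ\transp)\matA} + \FNormS{\matQ\matQ\transp\matA - \matQ\matY} = \FNormS{\matA - \matQ\matQ\transp\matA} + \FNormS{\matQ\transp\matA - \matY}, $$
where the second equality uses $\FNorm{\matQ\matM}=\FNorm{\matM}$. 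The first term is fixed, and Eckart-Young applied to $\matQ\transp\matA$ tells me the unique minimizer over rank-$k$ $\matY$ is $\matY = (\matQ\transp\matA)_k$. Hence $\Pi_{\matC,k}^{\mathrm{F}}(\matA) = \matQ(\matQ\transp\matA)_k$, giving equality.

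For the spectral part, I would use the \emph{same} decomposition, but now exploit that its two pieces are orthogonal \emph{as column spaces}:
$$ \matA - \matQ(\matQ\transp\matA)_k \;=\; \underbrace{\matQ\bigl(\matQ\transp\matA - (\matQ\transp\matA)_k\bigr)}_{=:\matM_1} \;+\; \underbrace{(\matI - \matQ\matQ\transp)\matA}_{=:\matM_2}, $$
with $\matM_1\transp \matM_2 = 0$. This yields $(\matM_1+\matM_2)\transp(\matM_1+\matM_2) = \matM_1\transp\matM_1 + \matM_2\transp\matM_2$, so taking $\lambda_{\max}$ gives
$\TNormS{\matM_1+\matM_2} \le \TNormS{\matM_1} + \TNormS{\matM_2}.$
The first summand equals $\TNormS{\matQ\transp\matA - (\matQ\transp\matA)_k} = \sigma_{k+1}^2(\matQ\transp\matA)$. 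To compare both summands to $\TNormS{\matA - \Pi_{\matC,k}^{2}(\matA)}$, write $\Pi_{\matC,k}^{2}(\matA) = \matQ\matX^{\star}$ with $\rank(\matX^{\star}) \le k$; then $\TNorm{\matA - \matQ\matQ\transp\matA} \le \TNorm{\matA - \matQ\matX^{\star}}$ since $\matQ\matQ\transp\matA$ is the (unrestricted) best approximation in $\col(\matC)$, and by left-multiplying by $\matQ\transp$ (which has spectral norm one),
$$ \sigma_{k+1}(\matQ\transp\matA) \;\le\; \TNorm{\matQ\transp\matA - \matX^{\star}} \;\le\; \TNorm{\matA - \matQ\matX^{\star}} \;=\; \TNorm{\matA - \Pi_{\matC,k}^{2}(\matA)}. $$
Combining gives $\TNormS{\matA - \matQ(\matQ\transp\matA)_k} \le 2\,\TNormS{\matA - \Pi_{\matC,k}^{2}(\matA)}$.

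For the runtime, Gram-Schmidt (or a thin QR) on $\matC$ costs $O(mc^2)$; forming $\matQ\transp\matA$ is $O(mnc)$; the SVD of the $c\times n$ matrix $\matQ\transp\matA$ is $O(nc^2)$ (since $c\le n$ is the smaller dimension); truncating to rank $k$ and multiplying $\matQ\cdot(\matQ\transp\matA)_k$ is $O(mnk) = O(mnc)$. Summing yields $O(mnc + (m+n)c^2)$. The main obstacle here is the spectral bound: the natural triangle inequality only gives a factor of $4$ on the squared norm, so the critical step is recognizing that the two summands in the error decomposition live in orthogonal column subspaces, which is what drops the constant down to $2$.
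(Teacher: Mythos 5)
Your proof is correct and complete: the matrix-Pythagoras split giving the Frobenius equality, and the orthogonal column-space decomposition $\matM_1\transp\matM_2=\bm{0}$ that drops the spectral constant from $4$ to $2$, are exactly the right arguments, and the runtime accounting is accurate. The paper itself does not prove this lemma but defers it to~\cite{BDM11a}; your write-up is essentially the proof given there, so there is nothing to fix.
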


Finally, observe that Lemma~\ref{lem:structural} indeed provides an upper bound for the residual
error $\XNormS{\matA - \Pi_{\matC,k}^{\xi}(\matA)}$. Hence, all bounds in 
Theorems~\ref{thm1},~\ref{thm2},~\ref{thm1_ext1}, and~\ref{thm1_ext2} 
hold for the error $\XNormS{\matA - \Pi_{\matC,k}^{\xi}(\matA)}$ as well,
and by Lemma~\ref{lem:bestF} one can provide bounds for the errors
$\FNormS{\matA-\matQ(\matQ\transp \matA)_k}$ and 
$\TNormS{\matA-\matQ(\matQ\transp \matA)_k}$.

\section{Concluding Remarks}
We provided a rigorous theoretical analysis of an old and popular \emph{deterministic} feature selection algorithm from the statistics literature~\cite{Jol72}. Although randomized algorithms are often easier to analyze, we believe that deterministic algorithms are simpler to implement and explain, hence more attractive to practitioners and data analysts. 

One interesting path for future research is understanding the connection of this work with the so-called ``spectral graph sparsification problem''~\cite{SS08}. In that case, edge selection in a graph is implemented via randomized leverage scores sampling from an appropriate matrix (see Theorem 1 in~\cite{SS08}). Note that in the context of graph sparsification, leverage scores correspond to the so-called ``effective resistances'' of the graph. Can deterministic effective resistances sampling be rigorously analyzed? What graphs have effective resistances following a power law distribution?

{
\bibliographystyle{abbrv}

\begin{thebibliography}{10}

\bibitem{AB13}
H.~Avron and C.~Boutsidis.
\newblock Faster subset selection for matrices and applications.
\newblock {\em SIAM Journal on Matrix Analysis and Applications (SIMAX)}, 2013.

\bibitem{Bache+Lichman:2013}
K.~Bache and M.~Lichman.
\newblock {UCI} machine learning repository, 2013.

\bibitem{BQ98a}
C.~H. Bischof and G.~Quintana-Ort{\'i}.
\newblock Computing rank-revealing {QR} factorizations of dense matrices.
\newblock {\em ACM Trans. Math. Softw}, 24(2):226--253, 1998.

\bibitem{BDM11a}
C.~Boutsidis, P.~Drineas, and M.~Magdon-Ismail.
\newblock Near optimal column based matrix reconstruction.
\newblock {\em SIAM Journal on Computing (SICOMP)}, 2013.

\bibitem{BMD08}
C.~Boutsidis, M.~W. Mahoney, and P.~Drineas.
\newblock Unsupervised feature selection for principal components analysis.
\newblock In {\em KDD}, pages 61--69, 2008.

\bibitem{BMD09a}
C.~Boutsidis, M.~W. Mahoney, and P.~Drineas.
\newblock An improved approximation algorithm for the column subset selection
  problem.
\newblock In {\em SODA}, pages 968--977, 2009.

\bibitem{BW13}
C.~Boutsidis and D.~Woodruff.
\newblock Optimal cur matrix decompositions.
\newblock {\em In STOC}, 2014.

\bibitem{broadbent2010subset}
M.~E. Broadbent, M.~Brown, K.~Penner, I.~Ipsen, and R.~Rehman.
\newblock Subset selection algorithms: Randomized vs. deterministic.
\newblock {\em SIAM Undergraduate Research Online}, 3:50--71, 2010.

\bibitem{brouwer2012spectra}
A.~A.~E. Brouwer and W.~H. Haemers.
\newblock {\em Spectra of graphs}.
\newblock Springer, 2012.

\bibitem{CH94}
T.~F. Chan and P.~C. Hansen.
\newblock Low-rank revealing {QR}~factorizations.
\newblock {\em Numerical Linear Algebra with Applications}, 1:33--44, 1994.

\bibitem{CI94}
S.~Chandrasekaran and I.~C.~F. Ipsen.
\newblock On rank-revealing factorizations.
\newblock {\em SIAM J. Matrix Anal. Appl.}, 15:592--622, 1994.

\bibitem{DR10}
A.~Deshpande and L.~Rademacher.
\newblock Efficient volume sampling for row/column subset selection.
\newblock In {\em Proceedings of the 42th Annual ACM Symposium on Theory of
  Computing (STOC)}, 2010.

\bibitem{DMM08}
P.~Drineas, M.~W. Mahoney, and S.~Muthukrishnan.
\newblock Relative-error cur matrix decompositions.
\newblock {\em SIAM Journal Matrix Analysis and Applications}, 30(2):844--881,
  2008.

\bibitem{FKV04}
A.~Frieze, R.~Kannan, and S.~Vempala.
\newblock Fast {Monte-Carlo} algorithms for finding low-rank approximations.
\newblock {\em Journal of the ACM}, 51(6):1025--1041, 2004.

\bibitem{GP13}
M.~Ghashami and J.~W. Phillips.
\newblock Relative Errors for Deterministic Low-Rank Matrix Approximations.
\newblock In {\em SODA}, 2014.

\bibitem{GM13}
A.~Gittens and M.~W. Mahoney.
\newblock Revisiting the nystrom method for improved large-scale machine
  learning.
\newblock In {\em ICML (3)}, pages 567--575, 2013.

\bibitem{Gol65}
G.~H. Golub.
\newblock Numerical methods for solving linear least squares problems.
\newblock {\em Numer. Math.}, 7:206--216, 1965.

\bibitem{GE96}
M.~Gu and S.~Eisenstat.
\newblock Efficient algorithms for computing a strong rank-revealing {QR}
  factorization.
\newblock {\em SIAM Journal on Scientific Computing}, 17:848--869, 1996.

\bibitem{GS12}
V.~Guruswami and A.~K. Sinop.
\newblock Optimal column-based low-rank matrix reconstruction.
\newblock In {\em SODA}. SIAM, 2012.

\bibitem{HP92}
Y.~P. Hong and C.~T. Pan.
\newblock Rank-revealing {QR}~factorizations and the singular value
  decomposition.
\newblock {\em Mathematics of Computation}, 58:213--232, 1992.

\bibitem{ipsen2012effect}
I.~C. Ipsen and T.~Wentworth.
\newblock The effect of coherence on sampling from matrices with orthonormal
  columns, and preconditioned least squares problems.
\newblock {\em arXiv preprint arXiv:1203.4809}, 2012.

\bibitem{Jol72}
I.~Jolliffe.
\newblock Discarding variables in a principal component analysis. i: Artificial
  data.
\newblock {\em Applied Statistics}, 21(2):160--173, 1972.

\bibitem{Jol73}
I.~Jolliffe.
\newblock Discarding variables in a principal component analysis. ii: Real
  data.
\newblock {\em Applied Statistics}, 22(1):21--31, 1973.

\bibitem{Jol02}
I.~Jolliffe.
\newblock {\em Principal Component Analysis}.
\newblock Springer; 2nd edition, 2002.

\bibitem{kunegis2013konect}
J.~Kunegis.
\newblock Konect: the koblenz network collection.
\newblock In {\em WWW}. International World Wide Web Conferences Steering
  Committee, 2013.

\bibitem{leskovec2009snap}
J.~Leskovec.
\newblock Snap stanford network analysis project.
\newblock 2009.

\bibitem{Lib13}
E.~Liberty.
\newblock Simple and deterministic matrix sketching
\newblock Proceedings of the 19th ACM SIGKDD international conference on Knowledge discovery and data mining. 
\newblock 2013.

\bibitem{DM09}
M.~W. Mahoney and P.~Drineas.
\newblock Cur matrix decompositions for improved data analysis.
\newblock {\em Proceedings of the National Academy of Sciences},
  106(3):697--702, 2009.

\bibitem{Pan00}
C.~T. Pan.
\newblock On the existence and computation of rank-revealing {LU}
  factorizations.
\newblock {\em Linear Algebra and its Applications}, 316:199--222, 2000.

\bibitem{PKB14full}
D. Papapailiopoulos, A. Kyrillidis, and C. Boutsidis.
\newblock Provable Deterministic Leverage Score Sampling.
\newblock {\em arXiv preprint arXiv:1404.1530,} 2014

\bibitem{paschou2007pca}
P.~Paschou, E.~Ziv, E.~G. Burchard, S.~Choudhry, W.~Rodriguez-Cintron, M.~W.
  Mahoney, and P.~Drineas.
\newblock Pca-correlated snps for structure identification in worldwide human
  populations.
\newblock {\em PLoS genetics}, 3(9):e160, 2007.

\bibitem{RV07}
M.~Rudelson and R.~Vershynin.
\newblock Sampling from large matrices: An approach through geometric
  functional analysis.
\newblock {\em JACM: Journal of the ACM}, 54, 2007.

\bibitem{SS08}
N.~Srivastava and D.~Spielman.
\newblock Graph sparsifications by effective resistances.
\newblock In {\em STOC}, 2008.

\bibitem{Ste99}
G.~Stewart.
\newblock Four algorithms for the efficient computation of truncated {QR}
  approximations to a sparse matrix.
\newblock {\em Numerische Mathematik}, 83:313--323, 1999.

\bibitem{sun2007less}
J.~Sun, Y.~Xie, H.~Zhang, and C.~Faloutsos.
\newblock Less is more: Compact matrix decomposition for large sparse graphs.
\newblock In {\em SDM}. SIAM, 2007.

\bibitem{tong2008colibri}
H.~Tong, S.~Papadimitriou, J.~Sun, P.~S. Yu, and C.~Faloutsos.
\newblock Colibri: fast mining of large static and dynamic graphs.
\newblock In {\em KDD}. ACM, 2008.

\bibitem{Tyr96}
E.~Tyrtyshnikov.
\newblock {Mosaic-skeleton approximations}.
\newblock {\em Calcolo}, 33(1):47--57, 1996.

\bibitem{Zafarani+Liu:2009}
R.~Zafarani and H.~Liu.
\newblock Social computing data repository at {ASU}, 2009.

\bibitem{zouzias2012matrix}
A.~Zouzias.
\newblock A matrix hyperbolic cosine algorithm and applications.
\newblock In {\em ICALP}. Springer, 2012.

\end{thebibliography}

}
\end{document}